\documentclass[conference,onecolumn]{ieeeconf}

\IEEEoverridecommandlockouts
\usepackage[svgnames,dvipsnames]{xcolor}
\usepackage{textcomp}
\usepackage{amssymb,amsfonts,mathtools,cite,mathrsfs}
\usepackage{graphicx}
\usepackage{booktabs}
\usepackage{placeins}
\usepackage{array}
\usepackage[unicode=true,hidelinks]{hyperref}
\usepackage{booktabs}

\newcommand\scalemath[2]{\scalebox{#1}{\mbox{\ensuremath{\displaystyle #2}}}}
\newcommand{\interior}[1]{{\kern0pt#1}^{\mathrm{o}}}

\usepackage{algorithm}
\usepackage{algpseudocode}

\usepackage{pgfplots}
\pgfplotsset{compat=1.18} 

\usetikzlibrary{
    calc, decorations.pathreplacing, positioning, fit, shapes, shapes.multipart, 
    arrows, arrows.meta, shadows, backgrounds
}
\definecolor{ink}{HTML}{253345}
\definecolor{bluebg}{HTML}{F1F5FA}
\definecolor{tealbg}{HTML}{EFF7F5}
\definecolor{amber}{HTML}{B77937}
\definecolor{muted}{HTML}{647386}

\definecolor{matlabblue}{RGB}{0,114,189}
\definecolor{matlaborange}{RGB}{217,83,25}
\definecolor{matlabyellow}{RGB}{237,177,32}
\definecolor{matlabpurple}{RGB}{126,47,142}
\definecolor{matlabgreen}{RGB}{119,172,48}

\DeclareMathOperator\cbf{\scalemath{0.7}{\mathrm{cbf}}}
\DeclareMathOperator\rcbf{\scalemath{0.7}{\mathrm{rcbf}}}

\newtheorem{assumption}{Assumption}
\newtheorem{definition}{Definition}
\newtheorem{remark}{Remark}
\newtheorem{lemma}{Lemma}
\newtheorem{proposition}{Proposition}
\newtheorem{theorem}{Theorem}

\usepackage[table]{xcolor}
\definecolor{lifshade}{RGB}{237,243,248}

\definecolor{ink}{HTML}{302E2A}
\definecolor{line}{HTML}{77736B}
\definecolor{ochre}{HTML}{98763F}
\definecolor{sand}{HTML}{F3ECDE}
\definecolor{warm}{HTML}{F4F2ED}
\definecolor{edge}{HTML}{C8C1B4}

\begin{document}

\title{\LARGE \bf Compute-Constrained Safety Filters with Neuromorphic Event Triggering}

\author{Tochukwu E. Ogri$^{1}$, Opeyemi Owolabi$^{1}$, Luke Fina$^{2}$, Christopher Petersen$^{1}$ and Rushikesh Kamalapurkar$^{1}$%
\thanks{This research was supported in part by the Air Force Research Laboratory under contract number FA8651-24-1-0019. Any opinions, findings, or recommendations in this article are those of the authors and do not necessarily reflect the views of the sponsoring agencies.}%
\thanks{$^{1}$ Department of Mechanical and Aerospace Engineering, University of Florida, email: {\tt\footnotesize \{tochukwu.ogri, opeyemi.owolabi, c.petersen1, rkamalapurkar\}@ufl.edu}.}%
\thanks{$^{2}$ Naval Surface Warfare Center, Panama City Division, Panama City, FL, USA, email: {\tt \footnotesize luke.a.fina.civ@us.navy.mil.}}}

\maketitle
\thispagestyle{empty}
\pagestyle{empty}

\begin{abstract}
Safety filters that rely on control barrier function (CBF) quadratic programs (QPs) are typically analyzed under the assumption that the QP is solved instantaneously, although nonzero computation time can render a computed input unsafe by the time it is applied. This paper develops a dual leaky integrate-and-fire (LIF) event-triggered safety filter for linear systems that accounts for this delay. Two LIF states monitor a shifted CBF residual and the deviation of the held input from the nominal feedback law. A computation is initiated when either LIF state reaches its threshold; the previous input remains applied until the prescribed application time, and any solution returned earlier is stored until then. A closed-form charging profile relates the safety-LIF threshold to residual-growth bounds, ensuring sufficient residual margin over the application delay. Under the stated assumptions, the resulting hybrid system guarantees forward invariance, completeness, and non-Zeno execution, with positive lower bounds on computation-initiation and input-application times. A spacecraft terminal-approach simulation demonstrates safety with fewer optimizer calls than a delay-tightened periodic filter.
\end{abstract}

\section{Introduction}\label{sec:introduction}

    Safety-critical control requires state constraints to be enforced despite disturbances, modeling uncertainty, and computation and sampling constraints. These issues arise in applications such as spacecraft rendezvous \cite{DongSafety2017}, docking \cite{BreedenGuaranteed2022}, and proximity operations, where collision avoidance and keep-out constraints must be maintained. Existing approaches include penalty-function formulations in constrained optimal control \cite{SCC.Ogri.Qureshi.ea2026}, barrier transformations \cite{SCC.Mahmud.Nivison.ea2021}, barrier states \cite{SCC.Almubarak.Sadegh.ea2025, SCC.Satharasi.Ogri.ea2026}, and control barrier functions (CBFs) \cite{SCC.Ames.Grizzle.ea2014}.

CBFs provide conditions for enforcing state constraints through forward invariance of a prescribed safe set \cite{SCC.Wieland.Allgoewer2007, SCC.Ames.Xu.ea2017, SCC.Ames.Coogan.ea2018}. For control-affine systems, these conditions are commonly imposed through quadratic program (QP)-based safety filters that minimize deviation from a nominal input subject to the CBF and input constraints. Digital implementations use sampled measurements and zero-order-held inputs, so a barrier condition enforced at a sampling instant need not hold throughout the subsequent holding interval \cite{SCC.Singletary.Chen.ea2020}.

Sample-and-hold CBF methods address this issue by bounding inter-sample deviations. In particular, \cite{SCC.Singletary.Chen.ea2020} develops CBF conditions for sampled-data systems with known input delay, including numerical computation delay. The delay is handled within the sampled-data CBF framework, whereas the present work addresses when CBF-filter computations should be initiated under delayed input application. Separately, \cite{SCC.Bahati.Ong.ea2024} characterizes possible safety violations through input-to-state safety and develops a strengthened condition that preserves the original safe set, but assumes immediate application of the sampled input. With nonzero computation time, the previously applied input remains active while the state evolves, so a solution satisfying the CBF constraint at computation initiation need not satisfy it at application.

Repeated optimization is another implementation constraint. Explicit CBF filters \cite{SCC.Mestres.Mousavi.ea2025} reduce online computation by partitioning the state space and deriving a controller formula for each region. Event-triggered control (ETC) instead determines update times through state-dependent conditions to reduce communication or computation \cite{SCC.Tabuada2007,SCC.Yang.Belta.ea2019}. ETC has also been applied to CBF-based safety control \cite{SCC.Taylor.Ong.ea2021}. The present setting couples event-triggered computation with delayed input application: the trigger must preserve safety under the previously applied input until the prescribed application time and account for the state evolution during the delay.

In this paper, this timing requirement is addressed using leaky integrate-and-fire (LIF) dynamics inspired by neuromorphic control \cite{SCC.Medvedeva.Castanos.ea2025}. LIF dynamics have been used for event-driven plasma-density control, impulsive pendulum actuation, and spiking control of linear systems \cite{SCC.Jansen.Petri.ea2025,SCC.Petri.Postoyan.ea2025,SCC.Petri.Scheres.ea2026}. Here, LIF events schedule the computation of a control input applied after a fixed delay. The proposed dual-LIF controller uses separate safety and performance states. A shifted CBF residual drives the safety LIF, while the held-input deviation from the nominal feedback law drives the performance LIF. When either state reaches its threshold, computation is initiated; the previous input remains applied until the prescribed application time, and any solution returned earlier is retained until then.

The primary contribution of this paper is a dual-LIF event-triggering mechanism for scheduling CBF-filter computations under delayed input application. A closed-form charging profile and threshold condition determine when computation must begin so that the held input remains safe until the prescribed application time. A hybrid closed-loop model captures state sampling, computation initiation, delayed input application, and LIF reset dynamics. Under the stated assumptions, we establish strong forward invariance of an admissible hybrid-state set, completeness of maximal solutions, and positive lower bounds on consecutive jumps, computation initiations, and input applications, thereby excluding Zeno behavior. A planar spacecraft terminal-approach example compares safety margins and QP solve counts with periodic CBF implementations.

\section{Problem Formulation}\label{sec:problem}

Consider the linear system with additive disturbance
\begin{equation}
    \dot x=Ax+Bu+d(t),\qquad x(0)=x_0,
    \label{eq:plant}
\end{equation}
where $x\in\mathbb{R}^n$ is the state, $u\in\mathcal U\subset\mathbb{R}^m$ is the control input, $A\in\mathbb{R}^{n\times n}$ and $B\in\mathbb{R}^{n\times m}$ are constant matrices, and $d:\mathbb{R}_{\geq0}\to\mathbb{R}^n$ is an unknown measurable disturbance. The set $\mathcal U$ is nonempty, compact, and convex. Solutions of \eqref{eq:plant} are understood in the Carath\'eodory sense. Define the disturbance set $\mathcal W \coloneqq \{w\in\mathbb R^n:\|w\|\leq\bar d\}$. We impose the following assumption.
\begin{assumption}\label{ass:disturbance}
    There exists a known constant $\bar d\geq0$ such that the admissible disturbance satisfies $d(t) \in \mathcal{W}$ for almost every $t\geq0$.
\end{assumption}

Let $h:\mathbb{R}^n\to\mathbb{R}$ be continuously differentiable, with zero as a regular value; i.e., $\nabla h(x)\neq 0$ whenever $h(x)=0$. The associated zero superlevel set is assumed to be nonempty and compact, and is defined as
   $\mathcal S\coloneqq \left\{x\in\mathbb{R}^n:h(x)\geq0\right\}$.

The following notions of invariance and safety are adopted.
\begin{definition}\label{def:invariant}
A set $\mathcal S\subset\mathbb R^n$ is robustly forward invariant for \eqref{eq:plant} under a feedback law $\pi:\mathbb R^n\to\mathcal U$, with $u=\pi(x)$, if, for every admissible disturbance $d$ and every corresponding Carath\'eodory solution $x(\cdot)$, $x(0)\in\mathcal S$ implies $x(t)\in\mathcal S$ for all $t\geq0$.
\end{definition}

The system \eqref{eq:plant} under $u=\pi(x)$ is safe with respect to $\mathcal S$ if $\mathcal S$ is robustly forward invariant.
    
Forward invariance is characterized by the behavior of the closed-loop vector field on the boundary of the safe set. By Nagumo's theorem \cite{SCC.BLANCHINI.ea1999}, for a locally Lipschitz continuous drift $F$, the set $\mathcal S$ is forward invariant for the closed-loop system $\dot x=F(x)$ if and only if
$h(x)=0$ implies $\nabla h(x)F(x)\geq0$.
CBFs provide a constructive framework to enforce this boundary condition by requiring a state-dependent inequality whose restriction to $h(x)=0$ reduces to $\nabla h(x)F(x)\geq0$. For the disturbance model in \eqref{eq:plant}, we use the following robust CBF definition.
\begin{definition}\label{def:cbf}
    A continuously differentiable function $h:\mathbb{R}^n\to\mathbb{R}$ is a CBF for \eqref{eq:plant} if there exists an extended class-$\mathcal K$ function $\alpha$ such that for all $x\in\mathcal S$,
    \begin{equation}
        \sup_{u\in\mathcal U} \left[ \nabla h(x)(Ax+Bu) -\bar d\|\nabla h(x)\| \right] \geq -\alpha\bigl(h(x)\bigr).
        \label{eq:cbf_condition}
    \end{equation}
\end{definition}

To ensure the well-posedness of the safety filters, we impose the following regularity conditions.

\begin{assumption}\label{ass:cbf_regularity}
There exists an open neighborhood of $\mathcal S$ on which $\nabla h$ is locally Lipschitz continuous. The extended class-$\mathcal K$ function $\alpha$ is locally Lipschitz continuous on an open interval containing $\{h(x):x\in\mathcal S\}$.
\end{assumption}

Define the disturbance-adjusted residual $r(x,u) \coloneqq -\nabla h(x)(Ax+Bu) - \alpha(h(x)) + \bar d\|\nabla h(x)\|$. For each fixed $x$, the mapping $u\mapsto r(x,u)$ is affine. Under Assumption~\ref{ass:cbf_regularity}, there exists an open neighborhood of $\mathcal S$ on which the mapping $x\mapsto r(x,u)$ is locally Lipschitz continuous, uniformly over $u\in\mathcal U$. The corresponding set of admissible inputs is given by
\begin{equation}
    \mathcal{K}_{\cbf}(x) \coloneqq \{u \in \mathcal{U} : r(x,u) \leq 0\}.
    \label{eq:cbf_admissible_set}
\end{equation}
The following result establishes robust forward invariance under any locally Lipschitz continuous CBF-admissible feedback policy.
\begin{theorem}\label{thm:cbf_invariance}
Suppose Assumption~\ref{ass:disturbance} holds and $h$ satisfies \eqref{eq:cbf_condition}. If a locally Lipschitz continuous policy $\pi:\mathbb{R}^n\to\mathcal U$ satisfies
\begin{equation}
    \pi(x)\in\mathcal K_{\cbf}(x), \qquad \forall x\in\mathcal S, \label{eq:cbf_feedback_condition}
\end{equation}
then $\mathcal S$ is robustly forward invariant for $\dot x=Ax+B\pi(x)+d(t)$.
\end{theorem}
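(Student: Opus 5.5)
The argument runs along a single Carath\'eodory trajectory and uses a comparison-type contradiction. The time-varying disturbance prevents a direct appeal to the autonomous Nagumo statement. Fix an admissible disturbance $d$ and a Carath\'eodory solution $x(\cdot)$ of $\dot x=Ax+B\pi(x)+d(t)$ with $x(0)\in\mathcal S$. Along this solution, $t\mapsto h(x(t))$ is absolutely continuous because $h$ is $C^1$ and $x(\cdot)$ is absolutely continuous. For almost every $t$ we have
\begin{equation*}
\frac{d}{dt}h(x(t))=\nabla h(x)\bigl(Ax+B\pi(x)\bigr)+\nabla h(x)d(t)\geq \nabla h(x)\bigl(Ax+B\pi(x)\bigr)-\bar d\|\nabla h(x)\|,
\end{equation*}
where the inequality uses Cauchy--Schwarz and Assumption~\ref{ass:disturbance}. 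Whenever $x(t)\in\mathcal S$, condition \eqref{eq:cbf_feedback_condition} gives $r(x,\pi(x))\le 0$, so the right-hand side is at least $-\alpha(h(x(t)))$. Hence $\dot{\eta}\geq-\alpha(\eta)$ almost everywhere on any interval where the trajectory stays in $\mathcal S$, with $\eta(t)\coloneqq h(x(t))$.

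The key step is to rule out an exit. Suppose $x(t_1)\notin\mathcal S$ for some $t_1>0$. Let $t_0\coloneqq\sup\{t\in[0,t_1]: h(x(t))\ge 0\}$. By continuity $h(x(t_0))=0$, and $h(x(t))<0$ on $(t_0,t_1]$. The inequality above was only established on $\mathcal S$, so I extend it slightly outside. On a compact neighborhood $N$ of $\mathcal S$ inside the open set of Assumption~\ref{ass:cbf_regularity}, the map $x\mapsto r(x,\pi(x))$ is locally Lipschitz. This follows from the local Lipschitz continuity of $\nabla h$, $\alpha$, and $\pi$, together with boundedness of $\mathcal U$. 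The map $x\mapsto \alpha(h(x))$ is likewise locally Lipschitz there.

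For $t$ slightly larger than $t_0$, I compare $x(t)$ with the nearest point $p(t)\in\mathcal S$; one could instead use a point on the segment toward $x(t_0)$. For these $t$, $x(t)\in N$, and I can take $\|x(t)-p(t)\|\le\|x(t)-x(t_0)\|$. Since $r(p(t),\pi(p(t)))\le0$, the Lipschitz bound gives $\dot\eta(t)\ge -\alpha(\eta(t)) - L\,\mathrm{dist}(x(t),\mathcal S)$ almost everywhere. I expect this step to be the main obstacle. The error term is not obviously controlled by $|\eta|$ alone, so this crude bound may not close the argument.

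The cleaner route I would try first exploits that zero is a regular value of $h$. Near the boundary point $x(t_0)$, where $\nabla h\neq0$, the implicit function theorem gives $\mathrm{dist}(x,\mathcal S)\le c\,|h(x)|$ locally. With this bound, the inequality becomes $\dot\eta\ge-\alpha(\eta)-Lc|\eta|$ on $(t_0,t_0+\delta)$ with $\eta(t_0)=0$. The right-hand side is locally Lipschitz in $\eta$ and vanishes at $\eta=0$, so the comparison lemma yields $\eta(t)\ge0$ on $[t_0,t_0+\delta)$. This contradicts the choice of $t_0$ if $t_1-t_0\ge\delta$. If instead $t_1-t_0<\delta$, it contradicts $h(x(t))<0$ on $(t_0,t_1]$ directly. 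Therefore $x(t)\in\mathcal S$ for all $t\ge0$, and $\mathcal S$ is robustly forward invariant. Compactness of $\mathcal S$ supplies the uniform constants $L$ and $c$ near the boundary, and the comparison lemma for absolutely continuous functions with a Lipschitz right-hand side finishes the argument.
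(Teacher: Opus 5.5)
Your argument is correct and takes a genuinely different route from the paper's.

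\textbf{The paper's route.} It never works along an individual trajectory. It absorbs the disturbance into the autonomous inclusion $\dot x\in\mathcal F_\pi(x)=\{Ax+B\pi(x)+w:w\in\mathcal W\}$, which is exactly how it sidesteps the time-variation you flag as an obstacle to Nagumo. It checks on $\partial\mathcal S$ that $\min_{w\in\mathcal W}\nabla h(x)(Ax+B\pi(x)+w)=-r(x,\pi(x))\geq0$. The regular-value description of the tangent cone then gives $\mathcal F_\pi(x)\subseteq T_{\mathcal S}(x)$. Finally it cites a strong-invariance theorem for locally Lipschitz set-valued maps.

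\textbf{Your route and its cost.} Your comparison argument on $\eta=h(x(t))$ is closed by the local error bound $\operatorname{dist}(x,\mathcal S)\le c|h(x)|$, which the regular-value hypothesis provides. It is self-contained and avoids nonsmooth invariance theory. Its cost is regularity: you place the Lipschitz estimate on $x\mapsto r(x,\pi(x))$. That requires $\nabla h$ and $\alpha$ to be locally Lipschitz, i.e.\ Assumption~\ref{ass:cbf_regularity}, whereas the theorem lists only Assumption~\ref{ass:disturbance}. The paper's route needs only $h\in C^1$ with zero a regular value, because all Lipschitz continuity is carried by $\mathcal F_\pi$ through $\pi$.

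\textbf{How to remove the extra hypothesis.} If you want your elementary proof under the stated hypotheses, run the same Gr\"onwall argument on $\tfrac12\operatorname{dist}(x(t),\mathcal S)^2$ instead of on $h(x(t))$.
\begin{itemize}
\item For $x\notin\mathcal S$ with nearest point $p\in\partial\mathcal S$, the KKT conditions give $x-p=-\lambda\nabla h(p)^\top$ with $\lambda\geq0$.
\item Hence $(x-p)^\top(Ap+B\pi(p)+d(t))\leq\lambda\, r(p,\pi(p))\leq0$, using only $h(p)=0$ and $\alpha(0)=0$.
\item Therefore, almost everywhere, $\tfrac{d}{dt}\tfrac12\operatorname{dist}(x(t),\mathcal S)^2\leq(x-p)^\top A(x-p)+(x-p)^\top B(\pi(x)-\pi(p))\leq L\operatorname{dist}(x(t),\mathcal S)^2$.
\end{itemize}
No regularity of $\nabla h$ or $\alpha$ is used. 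This is essentially the proof of the result the paper cites.
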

\begin{proof}
Define $\mathcal F_\pi(x)\coloneqq\{Ax+B\pi(x)+w:w\in\mathcal W\}$. For every $x\in\partial\mathcal S$, \eqref{eq:cbf_feedback_condition}, $h(x)=0$, and $\alpha(0)=0$ imply $\min_{w\in\mathcal W}\nabla h(x)(Ax+B\pi(x)+w)=-r(x,\pi(x))\geq0$. Since zero is a regular value of $h$, the Clarke tangent cone of $\mathcal S$ at $x$ satisfies $T_{\mathcal S}(x)=\{v\in\mathbb R^n:\nabla h(x) v\geq0\}$ for $x\in\partial\mathcal S$, while $T_{\mathcal S}(x)=\mathbb R^n$ for $x\in\operatorname{int}\mathcal S$. Hence $\mathcal F_\pi(x)\subseteq T_{\mathcal S}(x)$ for all $x\in\mathcal S$.

Because $\pi$ is locally Lipschitz continuous, $x\mapsto Ax+B\pi(x)$ is locally Lipschitz continuous. Since $\mathcal W$ is nonempty, compact, and convex, $\mathcal F_\pi$ has nonempty compact convex values and is locally bounded. Moreover, translation by the fixed set $\mathcal W$ preserves the local Lipschitz bound of $x\mapsto Ax+B\pi(x)$, so $\mathcal F_\pi$ is locally Lipschitz continuous as a set-valued map. Thus, $\mathcal F_\pi$ satisfies the standing hypotheses and the local-Lipschitz condition of \cite[Theorem~1]{SCC.Usevitch.Garg.ea2020}. Since $\mathcal F_\pi(x)\subseteq T_{\mathcal S}(x)$ for all $x\in\mathcal S$, \cite[Theorem~1]{SCC.Usevitch.Garg.ea2020} implies strong invariance of $\mathcal S$ for $\dot x\in\mathcal F_\pi(x)$.

Since every Carath\'eodory solution of \eqref{eq:plant} generated by an admissible disturbance $d$ is a solution of this inclusion. Thus, $\mathcal S$ is robustly forward invariant for \eqref{eq:plant} under $u=\pi(x)$.
\end{proof}

Let $u_{\mathrm{des}}:\mathcal S\to\mathcal U$ be a locally Lipschitz continuous nominal feedback law. Define the objective function
    $\varphi(x, u) \coloneqq \frac{1}{2}\|u-u_{\mathrm{des}}(x)\|^2$.
The pointwise CBF filter to ensure safety is given by (cf. \cite{SCC.Ames.Coogan.ea2018})
\begin{align}
    \pi_0(x) \coloneqq \underset{u\in\mathcal{U}}{\operatorname{arg\,min}} \quad & \varphi(x, u) \nonumber \\
    \text{subject to} \quad & r(x, u) \leq 0.
    \label{eq:standard_cbf_filter}
\end{align}
When $\mathcal U$ is polyhedral, \eqref{eq:standard_cbf_filter} is a quadratic program; for general compact convex $\mathcal U$, it is a convex quadratic optimization problem. Provided \eqref{eq:standard_cbf_filter} is feasible and the minimizer $\pi_0$ is locally Lipschitz continuous, Theorem~\ref{thm:cbf_invariance} guarantees that the continuous feedback law $u=\pi_0(x)$ renders $\mathcal S$ robustly forward invariant.

The optimization problem in \eqref{eq:standard_cbf_filter} is generally assumed to be solved instantaneously \cite{SCC.Ames.Coogan.ea2018}. In this paper, feasible, but possibly suboptimal, control inputs are assumed to become available within at most $T$ seconds of computation initiation and are applied at a prescribed time $T$ seconds after initiation. This nonzero computation time introduces two challenges. First, if a computation is initiated at $t_k$ using the sampled state $y=x(t_k)$, the resulting input $u_k$ is applied at $t_k+T$, after the state has evolved. Hence, although $u_k$ satisfies the tightened CBF constraint at $x(t_k)$, it must also satisfy the CBF condition \eqref{eq:cbf_condition} at the application state $x(t_k+T)$. Second, during $[t_k,t_k+T)$, the previously applied input $u_{k-1}$ remains active and must preserve safety until $u_k$ is applied. After application, $u_k$ remains constant until the next input $u_{k+1}$ is applied at $t_{k+1}+T$. We address the first challenge through constraint tightening and the second by initiating the next computation sufficiently early, based on the available residual margin. The objective of this paper is to design the tightened constraints and the triggers that request new solutions of \eqref{eq:standard_cbf_filter}.

\section{Constraint Tightening}
To account for the evolution of the system state between computation initiation and input application, the CBF constraint needs to be tightened by a margin $\delta>0$. Define the largest uniform feasible tightening as
\begin{equation}
    \delta_{\mathrm{feas}} \coloneqq \min_{x\in\mathcal S} \max_{v\in\mathcal U}\{-r(x,v)\}.
    \label{eq:delta_feas}
\end{equation}
For each $x\in\mathcal S$, the inner maximum is attained because $\mathcal U$ is compact and $r$ is continuous. Moreover, the map $x\mapsto\max_{v\in\mathcal U}\{-r(x,v)\}$ is continuous; hence, since $\mathcal S$ is compact, the minimum in \eqref{eq:delta_feas} is attained.
\begin{assumption}\label{ass:filter}
The uniform feasibility margin satisfies $\delta_{\mathrm{feas}}>0$.
\end{assumption}
For any $0<\delta<\delta_{\mathrm{feas}}$, define the $\delta$-tightened admissible control set
\begin{equation}
    \mathcal{K}_{\rcbf}^{\delta}(x) \coloneqq \left\{u\in\mathcal{U} : r(x,u)\leq -\delta\right\}. \label{eq:robust_set}
\end{equation}
By the definition of $\delta_{\mathrm{feas}}$ in \eqref{eq:delta_feas}, $\mathcal K_{\rcbf}^{\delta}(x)$ is nonempty for every $x\in\mathcal S$. 

To model early termination with a feasible suboptimal input, define the optimal value function $V^{*}(x)\coloneqq \min_{v\in\mathcal K_{\rcbf}^{\delta}(x)}\varphi(x,v)$ and, for a solver tolerance $\epsilon\geq0$, the $\epsilon$-suboptimal solution map as
\begin{equation}
    \Psi_{\epsilon}(x) \coloneqq \left\{ v\in\mathcal{K}_{\rcbf}^{\delta}(x) : \varphi(x, v)\leq V^{*}(x)+\epsilon \right\}. \label{eq:solution_map}
\end{equation}
That is, $\epsilon$ affects objective suboptimality but not feasibility of the tightened constraint.

\begin{proposition}\label{prop:filter_well_posed}
Let $0<\delta<\delta_{\mathrm{feas}}$. If Assumptions \ref{ass:disturbance}--\ref{ass:filter} hold, then the set-valued map $\mathcal K_{\rcbf}^{\delta}:\mathcal S\rightrightarrows\mathcal U$ is continuous on $\mathcal S$ and has nonempty compact convex values. Moreover, $V^*$ is continuous, $\Psi_\epsilon$ has nonempty compact values and is outer semicontinuous on $\mathcal S$, and the minimizer $\pi_\delta(x)\coloneqq\operatorname*{arg\,min}_{v\in\mathcal K_{\rcbf}^{\delta}(x)}\varphi(x,v)$ is unique and continuous. For every $v\in\Psi_\epsilon(x)$,
    $\|v-\pi_\delta(x)\|\leq\sqrt{2\epsilon}$.
\end{proposition}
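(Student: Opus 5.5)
The proof rests on one structural fact: $r(x,u)$ is affine in $u$ and jointly continuous. The tightened feasible set $\mathcal K_{\rcbf}^{\delta}(x)$ is therefore the intersection of the compact convex set $\mathcal U$ with a closed half-space, and a Slater point is available uniformly in $x$. I would first establish the value properties. Since $\delta<\delta_{\mathrm{feas}}$, for every $x\in\mathcal S$ there is $\bar v(x)\in\mathcal U$ with $r(x,\bar v(x))\le-\delta_{\mathrm{feas}}<-\delta$. Hence $\mathcal K_{\rcbf}^{\delta}(x)$ is nonempty. It is closed as a sublevel set of a continuous function intersected with compact $\mathcal U$, so it is compact. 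It is convex because $u\mapsto r(x,u)$ is affine.

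Next comes continuity of $\mathcal K_{\rcbf}^{\delta}$. Outer semicontinuity (closed graph) follows directly from continuity of $r$ and closedness of $\mathcal U$: if $x_i\to x$, $u_i\to u$ and $r(x_i,u_i)\le-\delta$, then $r(x,u)\le-\delta$. Since values lie in the compact set $\mathcal U$, the map is also locally bounded, so it is upper semicontinuous.

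Inner semicontinuity is the step I expect to be the main obstacle. Fix $x_i\to x$ and $u\in\mathcal K_{\rcbf}^{\delta}(x)$, and use the strict Slater point $\bar v=\bar v(x)$ with $r(x,\bar v)\le-\delta_{\mathrm{feas}}$. For $\lambda\in(0,1]$, set $u^\lambda=(1-\lambda)u+\lambda\bar v\in\mathcal U$. By affineness, $r(x,u^\lambda)\le-\delta-\lambda(\delta_{\mathrm{feas}}-\delta)$, which gives strict slack. By continuity of $r$, uniform on compact $\mathcal U$, we get $r(x_i,u^\lambda)\le-\delta$ for all large $i$. Choosing $\lambda_i\to0$ slowly enough (a diagonal argument) yields $u_i\coloneqq u^{\lambda_i}\in\mathcal K_{\rcbf}^{\delta}(x_i)$ with $u_i\to u$. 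Note that only continuity of $r$ is needed here, not Lipschitz continuity.

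With $\mathcal K_{\rcbf}^{\delta}$ continuous, compact-valued and nonempty, and $\varphi$ continuous (since $u_{\mathrm{des}}$ is continuous), Berge's maximum theorem gives continuity of $V^*$ and outer semicontinuity of the argmin map. Strong convexity of $\varphi(x,\cdot)$ on a convex set gives a unique minimizer $\pi_\delta(x)$, so the argmin map is single-valued and upper semicontinuous, and hence continuous.

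For $\Psi_\epsilon$, I would argue as follows. It contains $\pi_\delta(x)$, so it is nonempty. It is the intersection of a compact set with a closed sublevel set, so it is compact. Its graph is closed: if $x_i\to x$, $v_i\to v$, $v_i\in\mathcal K_{\rcbf}^{\delta}(x_i)$ and $\varphi(x_i,v_i)\le V^*(x_i)+\epsilon$, then the closed graph of $\mathcal K_{\rcbf}^{\delta}$ together with continuity of $\varphi$ and $V^*$ gives $v\in\Psi_\epsilon(x)$.

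Finally, the distance bound comes from the first-order optimality condition. For $v\in\mathcal K_{\rcbf}^{\delta}(x)$ we have $(\pi_\delta(x)-u_{\mathrm{des}}(x))^\top(v-\pi_\delta(x))\ge0$. Expanding the square then gives
$\varphi(x,v)=\varphi(x,\pi_\delta(x))+(\pi_\delta(x)-u_{\mathrm{des}}(x))^\top(v-\pi_\delta(x))+\tfrac12\|v-\pi_\delta(x)\|^2\ge V^*(x)+\tfrac12\|v-\pi_\delta(x)\|^2$.
Combining this with $\varphi(x,v)\le V^*(x)+\epsilon$ yields $\|v-\pi_\delta(x)\|\le\sqrt{2\epsilon}$.
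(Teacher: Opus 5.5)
Your proposal is correct and follows the paper's proof essentially step for step. It uses a closed graph with compact range for outer semicontinuity, a Slater-point convex combination exploiting affinity of $r(x,\cdot)$ for inner semicontinuity, Berge's theorem plus strict convexity for $V^*$ and $\pi_\delta$, a closed-graph argument for $\Psi_\epsilon$, and the first-order optimality inequality for the $\sqrt{2\epsilon}$ bound. The only difference is in the inner-semicontinuity step: the paper takes the Slater point $\vartheta_j$ at $x_j$ (available by the uniform margin $\delta_{\mathrm{feas}}$) with the explicit weight $\theta_j=e_j/(\eta+e_j)$, where $e_j=\max\{0,r(x_j,v)+\delta\}$, so $v_j$ is feasible for every $j$ without your diagonal choice of $\lambda_i\to0$.
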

\begin{proof}
    The proof is given in Appendix~\ref{app:prop_filter_proof}.
\end{proof}

Note that if $v_k\in\Psi_\epsilon(y)$ is applied at $t_k+T$ and $u_{\mathrm{des}}$ has Lipschitz constant $L_{\mathrm{des}}$ on $\mathcal S$, then
\begin{equation}
    \|v_k-u_{\mathrm{des}}(x(t_k+T))\| \leq \sqrt{2(V^*(y)+\epsilon)}+L_{\mathrm{des}}MT.
    \label{eq:application_performance_bound}
\end{equation}

The following section details the trigger condition used to ensure that a solution to \eqref{eq:standard_cbf_filter} is requested far enough away from the boundary of $\mathcal S$ to ensure safety in spite of a piecewise constant control signal.


\section{Compute-Constrained Event-Triggered CBF Architecture}
\label{sec:zoh_formulation}

Let $t_k$ denote the time instant when the $k$-th solution of \eqref{eq:standard_cbf_filter} is requested and $y \coloneqq x(t_k)$. Let $u_k \in \Psi_\epsilon (y)$ be a feasible solution returned by the solver for the problem initiated at $t_k$. The solution takes at most $T$ seconds to compute, i.e., it becomes available at some time $\hat t_k \in (t_k, t_k+T]$.

In this paper, we assume that the control input $u_{k-1}$, i.e., a feasible solution of \eqref{eq:standard_cbf_filter} initiated at time $t_{k-1}$ is applied to the system at time $t_{k-1}+T$ and kept constant over the interval $[t_{k-1}+T, t_k+T)$. At time $t_k + T$, the control input $u_k$, i.e., a feasible solution of \eqref{eq:standard_cbf_filter} initiated at time $t_k$ is applied and kept constant over $[t_k + T, t_{k+1} + T)$, resulting in a piecewise constant control signal. The timing convention is illustrated in Fig.~\ref{fig:timing}.
\begin{figure}[t]
    \centering
    \resizebox{\columnwidth}{!}{

\begin{tikzpicture}[
    font=\sffamily\small,
    text=ink,
    axis/.style={
        draw=line,
        line width=.7pt,
        -{Stealth[length=1.8mm]}
    },
    tick/.style={
        draw=line,
        line width=.7pt
    },
    hold/.style={
        draw=edge,
        blue,
        line width=2.2pt
    },
    comp/.style={
        draw=ochre,
        line width=2.2pt
    },
    store/.style={
        draw=ochre,
        line width=2.2pt,
        dash pattern=on 3pt off 2pt
    },
    lab/.style={
        font=\sffamily\footnotesize,
        text=ink,
        align=center
    },
    olab/.style={
        font=\sffamily\footnotesize,
        text=ochre,
        align=center
    },
    blab/.style={
        font=\sffamily\footnotesize,
        text=blue,
        align=center
    }
]

\def\tprev{0}
\def\tk{3.1}
\def\that{5.5}
\def\tapp{7.2}
\def\tnext{9.5}

\draw[axis] (-0.4,0) -- (\tnext+0.7,0)
    node[right] {$t$};

\foreach \x in {\tprev,\tk,\that,\tapp,\tnext}
    \draw[tick] (\x,-0.10) -- (\x,0.10);

\node[lab,below=4pt] at (\tprev,0) {$t_{k-1}+T$};
\node[lab,below=4pt] at (\tk,0) {$t_k$};
\node[lab,below=4pt] at (\that,0) {$\hat t_k$};
\node[lab,below=4pt] at (\tapp,0) {$t_k+T$};
\node[lab,below=4pt] at (\tnext,0) {$t_{k+1}$};

\draw[hold] (\tprev,1.45) -- (\tapp,1.45);
\draw[axis,blue] (\tprev,2.25) -- (\tprev,1.62);
\node[blab,above=1pt] at (\tprev,2.25)
    {apply $u_{k-1}$};
\node[lab,blue,xshift=15mm] at (\tprev,1.7)
    {$u = u_{k-1}$};

\draw[hold,magenta] (\tapp,1.45) -- ({\tnext+0.75},1.45);
\node[lab,magenta,xshift=15mm] at (\tapp,1.7)
    {$u = u_{k}$};
    
\draw[comp] (\tk,0.85) -- (\that,0.85);
\node[olab] at ({(\tk+\that)/2},1.08)
    {solver active};

\draw[store] (\that,0.85) -- (\tapp,0.85);
\node[olab] at ({(\that+\tapp)/2},1.08)
    {$u_k$ available};

\draw[axis] (\tk,2.25) -- (\tk,1.62);
\node[lab,above=1pt] at (\tk,2.25)
    {new solution request};

\draw[axis,magenta] (\tapp,2.25) -- (\tapp,1.62);
\node[olab,magenta,above=1pt] at (\tapp,2.25)
    {apply $u_k$};

\draw[axis] (\tnext,2.25) -- (\tnext,1.62);
\node[lab,above=1pt] at (\tnext,2.25)
    {new solution request};

\draw[decorate,decoration={brace,amplitude=4pt}]
    (\tprev,2.75) -- (\tk,2.75)
    node[midway,above=4pt,lab] {idle mode};

\draw[decorate,decoration={brace,amplitude=4pt}]
    (\tk,2.75) -- (\tapp,2.75)
    node[midway,above=4pt,lab] {application-pending mode};


\end{tikzpicture}
}\vspace{-1em}
    \caption{Timing of one update cycle. The input $u_{k-1}$ remains applied from $t_{k-1}+T$ until $t_k+T$. A computation is initiated at $t_k$. A solution $u_k$ is returned at some $\hat t_k\in(t_k,t_k+T]$ and applied at $t_k+T$.}
    \label{fig:timing}
\end{figure}

\subsection{Residual Growth Bounds}

Safety while $u_{k-1}$ is applied is quantified by the residual $r(x(t),u_{k-1})$. During $[t_k,t_k+T)$, the candidate input $u_k$ is not yet applied, but the variation of $r(x(t),u_k)$ determines whether $u_k$ satisfies the CBF condition \eqref{eq:cbf_condition} at the application time $t_k+T$. We therefore bound the variation of both the residual associated with the held input and the candidate residual associated with the input to be applied at $t_k+T$. Since $\mathcal S$, $\mathcal U$, and $\mathcal W$ are compact, define $M\coloneqq\max_{x\in\mathcal S,u\in\mathcal U,w\in\mathcal W}\|Ax+Bu+w\|$. Let $L_r>0$ be a uniform Lipschitz constant for $x \mapsto r(x, u)$ on $\mathcal S\times\mathcal U$. Select uniform constants $a,b>0$ such that, along every admissible constant-input trajectory contained in $\mathcal S$,
\begin{align}
    \frac{d}{dt}r(x(t),u) &\leq a \quad \text{a.e.}, \label{eq:residual_drift_a}\\
    \frac{d}{dt}r(x(t),v) &\leq b \quad \text{a.e.},\qquad \forall v\in\mathcal U.\label{eq:residual_drift_b}
\end{align}
On any interval $[s,t]$ over which $u$ is constant and $x([s,t])\subset\mathcal S$, every admissible disturbance satisfies $\|\dot x(\tau)\|\leq M$ almost everywhere on $[s,t]$. Therefore, one admissible choice is $a=b=L_rM$.

If a computation is initiated at time $t_k$ with the applied input $u_{k-1}$ satisfying $r(x(t_k),u_{k-1})\leq-aT$, then, for every $t\in[t_k,t_k+T]$ such that $x([t_k,t])\subset\mathcal S$,
\begin{multline}
    r(x(t),u_{k-1}) \leq r(x(t_k),u_{k-1})+a(t-t_k) \\
    \leq -aT+a(t-t_k) \leq 0.
    \label{eq:held_input_safety}
\end{multline}

For $y=x(t_k)$ and $u_k\in\Psi_\epsilon(y)$, $r(y,u_k)\leq-\delta$. If $x([t_k,t_k+T])\subset\mathcal S$, then integrating \eqref{eq:residual_drift_b} over $[t_k,t_k+T]$ gives $r(x(t_k+T),u_k)\leq-\delta+bT$. Therefore, the condition
\begin{equation}
    (a+b)T+\kappa<\delta<\delta_{\mathrm{feas}} \label{eq:zoh_margin_condition}
\end{equation}
implies $r(x(t_k+T),u_k)<-aT-\kappa$. The interval in \eqref{eq:zoh_margin_condition} is nonempty if and only if $(a+b)T+\kappa<\delta_{\mathrm{feas}}$.

\subsection{Dynamic Event-Triggering Mechanism}
\label{subsec:dynamic_triggers}

The control input is only updated on an event-triggered basis. If events were triggered only for safety based on the safety residual, then the applied control can deviate significantly from the desired nominal control, resulting in loss of performance. To limit such performance degradation, update events are also triggered based on the control deviation
\begin{equation}
    e(x,u)\coloneqq\|u-u_{\mathrm{des}}(x)\|.
    \label{eq:zoh_performance_error}
\end{equation}
The residual $r(x,u)$ and deviation $e(x,u)$ define the safety and performance LIF driving signals, respectively, defined as
\begin{align}
    \psi_{s}(x,u) &\coloneqq \max\left\{0,\, r(x,u)+aT+\kappa\right\}, \text{ and }\label{eq:zoh_safety_drive}\\
    \psi_{p}(x,u) &\coloneqq \max\left\{0,\, e(x,u)-u_{\mathrm{tol}}\right\}, \label{eq:zoh_performance_drive}
\end{align}
where $\kappa>0$ and $u_{\mathrm{tol}}\geq0$ are user-selected design parameters; $\kappa$ determines the width of the residual interval $[-aT-\kappa,-aT]$, while $u_{\mathrm{tol}}$ specifies the performance-deviation threshold. Note that, $\psi_s(x,u)>0$ if and only if $r(x,u)>-aT-\kappa$, and $\psi_p(x,u)>0$ if and only if $e(x,u)>u_{\mathrm{tol}}$.

In the following, we define the event-triggering mechanism using two LIF states $\xi_i$, $i\in\{s,p\}$, driven by the corresponding signals $\psi_i$. While no computation is active, the safety and performance LIF states $\xi_s,\xi_p\in\mathbb R_{\geq0}$ satisfy
\begin{align}
    \dot{\xi}_{s} &= -\mu_{s}\xi_{s} + k_{s}\psi_{s}(x,u), \label{eq:zoh_safety_filter}\\
    \dot{\xi}_{p} &= -\mu_{p}\xi_{p} + k_{p}\psi_{p}(x,u), \label{eq:zoh_performance_filter}
\end{align}
where $\mu_\ell\geq0$ and $k_\ell>0$ are the leakage rate and input gain, respectively, for $\ell\in\{s,p\}$. Thus, $\xi_s$ accumulates the safety driving signal $\psi_s$ subject to leakage $\mu_s\xi_s$, and $\xi_p$ accumulates the performance driving signal $\psi_p$ subject to leakage $\mu_p\xi_p$. A computation is initiated when $\xi_s=\Delta_s$ or $\xi_p=\Delta_p$, where $\Delta_s,\Delta_p>0$ are the corresponding firing thresholds. The LIF states are frozen during the application-pending mode, corresponding to $[t_k,t_k+T)$, and reset to zero when $u_k$ is applied at $t_k+T$.
The safety threshold $\Delta_s$ is selected to ensure that a computation is initiated no later than the time at which the residual reaches $-aT$. Since $\psi_s(x,u)>0$ for $r(x,u)>-aT-\kappa$, define
\begin{equation}
    \Phi_{s}(\rho) \coloneqq
    \begin{cases}
        \dfrac{k_{s}\rho^2}{2a}, & \mu_{s}=0,\\[10pt]
        \dfrac{k_sa}{\mu_s^2}\left(\dfrac{\mu_{s}\rho}{a}-1 +e^{-\mu_{s}\rho/a}\right), & \mu_{s}>0,
    \end{cases}
    \label{eq:zoh_charging_profile}
\end{equation}
for $\rho\geq0$. For $\rho>0$, $\Phi_s'(\rho)=k_s\rho/a>0$ when $\mu_s=0$, while $\Phi_s'(\rho)=\frac{k_s}{\mu_s}(1-e^{-\mu_s\rho/a})>0$ when $\mu_s>0$. Hence $\Phi_s$ is strictly increasing on $(0,\infty)$. Select
\begin{equation}
    0<\Delta_s\leq\Phi_s(\kappa).\label{eq:zoh_firing_condition}
\end{equation}
The following lemma shows that \eqref{eq:zoh_firing_condition} forces the safety LIF state to fire no later than the time at which the residual reaches $-aT$.
\begin{lemma}\label{lem:safety_filter_trigger}
Suppose \eqref{eq:zoh_firing_condition} holds, $u\in\mathcal U$ is constant on $[t_a,t_b]$, $x(t)\in\mathcal S$ for all $t\in[t_a,t_b]$, and $r(x(t_a),u)=-aT-\kappa$ and $r(x(t_b),u)=-aT$. If $\xi_s(t_a)\in[0,\Delta_s]$ and $\xi_s$ evolves according to \eqref{eq:zoh_safety_filter} until it reaches $\Delta_s$, then there exists $t^*\in[t_a,t_b]$ such that $\xi_s(t^*)=\Delta_s$.
\end{lemma}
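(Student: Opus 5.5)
Assume, to the contrary, that $\xi_s(t)<\Delta_s$ for all $t\in[t_a,t_b]$. Then $\xi_s$ obeys \eqref{eq:zoh_safety_filter} on the whole interval, and it suffices to show $\xi_s(t_b)\geq\Phi_s(\kappa)\geq\Delta_s$. The argument rests on a pointwise lower bound on the driving signal from the residual growth bound \eqref{eq:residual_drift_a}, followed by a comparison argument for the linear LIF dynamics.

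\emph{Step 1: lower bound on the drive.} Since $u$ is constant and $x([t_a,t_b])\subset\mathcal S$, \eqref{eq:residual_drift_a} gives $r(x(t_b),u)-r(x(t),u)\leq a(t_b-t)$ for every $t\in[t_a,t_b]$. Hence
$$r(x(t),u)+aT+\kappa\geq \kappa-a(t_b-t).$$
Applying the same bound on $[t_a,t_b]$ yields $\kappa\leq a(t_b-t_a)$, i.e.\ $t_b-t_a\geq\kappa/a$. Setting $\tau_0\coloneqq t_b-\kappa/a\geq t_a$, we get
$$\psi_s(x(t),u)\geq \max\{0,\kappa-a(t_b-t)\}=a\,(t-\tau_0)\quad\text{for } t\in[\tau_0,t_b],$$
and $\psi_s\geq0$ everywhere.

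\emph{Step 2: comparison.} Because $\xi_s(t_a)\geq0$ and $\psi_s\geq0$, the variation-of-constants formula gives $\xi_s(t)\geq0$ throughout. It then gives
$$\xi_s(t_b)\geq \int_{\tau_0}^{t_b} e^{-\mu_s(t_b-\sigma)}k_s\,a(\sigma-\tau_0)\,d\sigma.$$
Substituting $s=t_b-\sigma$ and writing $\rho=\kappa$, so that the interval length is $\rho/a$, the integral equals
$$k_s\int_0^{\rho/a} e^{-\mu_s s}(\rho-as)\,ds.$$
This evaluates to $k_s\rho^2/(2a)$ when $\mu_s=0$. When $\mu_s>0$, integration by parts gives $\frac{k_s a}{\mu_s^2}\bigl(\frac{\mu_s\rho}{a}-1+e^{-\mu_s\rho/a}\bigr)$. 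This is exactly $\Phi_s(\kappa)$ from \eqref{eq:zoh_charging_profile}. By \eqref{eq:zoh_firing_condition}, $\xi_s(t_b)\geq\Phi_s(\kappa)\geq\Delta_s$, which contradicts the assumption when the inequality is strict. In the equality case, $t^*=t_b$ works.

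\emph{Step 3: conclusion.} Continuity of $\xi_s$ with $\xi_s(t_a)\leq\Delta_s\leq\xi_s(t_b)$ and the intermediate value theorem yield $t^*\in[t_a,t_b]$ with $\xi_s(t^*)=\Delta_s$. If $\xi_s(t_a)=\Delta_s$, take $t^*=t_a$.

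The main obstacle is Step 1. The bound \eqref{eq:residual_drift_a} is only an almost-everywhere bound on the derivative, so $t\mapsto r(x(t),u)$ must be absolutely continuous to integrate it. This follows because $r(\cdot,u)$ is Lipschitz on $\mathcal S$ and $x$ is absolutely continuous. The drive is therefore bounded backward from $t_b$ rather than forward from $t_a$, and this backward bound is what produces the worst-case linear ramp matching $\Phi_s$.
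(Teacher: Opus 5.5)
Your proof is correct and follows essentially the same route as the paper: you bound the residual backward from $t_b$ to get $t_b-t_a\geq\kappa/a$ and $\psi_s\geq\max\{0,\kappa-a(t_b-s)\}$, then use variation of constants to obtain $\xi_s(t_b)\geq\Phi_s(\kappa)\geq\Delta_s$, and conclude by continuity. One small tidy-up: under your contrary hypothesis $\xi_s(t_b)<\Delta_s$, the bound $\xi_s(t_b)\geq\Delta_s$ is already a contradiction even when $\Phi_s(\kappa)=\Delta_s$, so your separate equality-case remark is unnecessary.
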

\begin{proof}
 The proof is given in Appendix~\ref{app:lemma_filter_proof}.
\end{proof}

The closed-loop system, including the trigger conditions and delayed input-update mechanism, is formalized next as a hybrid system.


\subsection{Hybrid Dynamical System}

Define the hybrid state $z = (x,u,y,\xi_{s},\xi_{p},\tau,q) \in \mathcal Z$, on $\mathcal Z \coloneqq \mathcal{S} \times \mathcal U \times \mathcal{S} \times [0,\Delta_{s}] \times [0,\Delta_{p}] \times [0,T] \times \{0,1\}$. Here, $y$ stores the state sampled at the initiation of a computation, $\tau$ is the time remaining before the next input application, and $q=0$ and $q=1$ denote the idle and application-pending modes, respectively. 

Define the mode-dependent flow sets $\mathcal C_i \coloneqq \{z\in\mathcal Z : q=0,\ \tau=0\}$ and $\mathcal C_a \coloneqq \{z\in\mathcal Z : q=1\}$, and let $\mathcal C \coloneqq \mathcal C_i \cup \mathcal C_a$. The set-valued flow map is defined as
\begin{equation}
    F(z)\coloneqq \left\{\begin{bmatrix}
        Ax+Bu+w\\
        0_m\\
        0_n\\
        (1-q)(-\mu_s\xi_s+k_s\psi_s(x,u))\\
        (1-q)(-\mu_p\xi_p+k_p\psi_p(x,u))\\
        -q\\
        0
    \end{bmatrix}:w\in\mathcal W\right\}.
    \label{eq:zoh_flow_map}
\end{equation}

In the following,  jumps due to the LIF states reaching their firing thresholds are referred to as initiation jumps and those due to application of an updated control input (i.e., $\tau$ reaching zero in the application-pending mode) are referred to as application jumps. Define
$\mathcal D_i\coloneqq\{z\in\mathcal C_i:\xi_s=\Delta_s\text{ or }\xi_p=\Delta_p\}$, $\mathcal D_a\coloneqq\{z\in\mathcal C_a:\tau=0\}$, and
$\mathcal D\coloneqq\mathcal D_i\cup\mathcal D_a$. The jump maps are defined as
\begin{align}
    G_i(z)&\coloneqq \{(x,u,x,\xi_s,\xi_p,T,1)\}, && z\in\mathcal D_i, \label{eq:zoh_start_map}\\
    G_a(z)&\coloneqq \{(x,v,y,0,0,0,0):v\in\Psi_\epsilon(y)\}, && z\in\mathcal D_a.
    \label{eq:zoh_finish_map}
\end{align}
The closed-loop system is represented by the hybrid inclusion $\mathcal H\coloneqq(\mathcal C,F,\mathcal D,G)$, where $G=G_i$ on $\mathcal D_i$ and $G=G_a$ on $\mathcal D_a$. At points in $\mathcal C\cap\mathcal D$, $\mathcal H$ permits either flow or jump. Algorithm~\ref{alg:zoh_execution} selects a jump whenever $z\in\mathcal D$.

\begin{algorithm}[h!]
\small
\caption{Event-Triggered Dual-LIF Controller}
\label{alg:zoh_execution}
\begin{algorithmic}[1]
\Require $x_0\in\mathcal S$, $u_0\in\mathcal K_{\rcbf}^{\delta}(x_0)$; parameters satisfying \eqref{eq:zoh_margin_condition} and \eqref{eq:zoh_firing_condition}
\State \textbf{Initialize} $z\gets(x_0,u_0,x_0,0,0,0,0)$
\Loop
    \State \textbf{Flow:} $\dot z\in F(z)$ \textbf{until} $z\in\mathcal D_i$
    \Statex \hspace{\algorithmicindent}\Comment{Idle mode: evolve the LIF states}
    \State \textbf{Jump (Initiation):} $(y,\tau,q)\gets(x,T,1)$
    \State \textbf{Trigger} asynchronous solver for $u_k\in\Psi_\epsilon(y)$
    \Statex \hspace{\algorithmicindent}\Comment{Retain $u_k$ until $\tau=0$ if returned early}
    \State \textbf{Flow:} $\dot z\in F(z)$ \textbf{until} $z\in\mathcal D_a$
    \Statex \hspace{\algorithmicindent}\Comment{Application-pending mode: decrease $\tau$}
    \State \textbf{Jump (Application):} $(u,\xi_s,\xi_p,\tau,q)\gets(u_k,0,0,0,0)$
\EndLoop
\end{algorithmic}
\end{algorithm}

\section{Safety and Execution Guarantees}
\label{sec:zoh_analysis}

The following proposition verifies the hybrid basic conditions for $\mathcal H$.
\begin{proposition}\label{prop:zoh_basic_conditions}
Under Assumptions~\ref{ass:disturbance} and \ref{ass:cbf_regularity}, and \eqref{eq:zoh_margin_condition}, $\mathcal H$ satisfies the hybrid basic conditions of \cite{SCC.Goebel.Sanfelice.ea2012}.
\end{proposition}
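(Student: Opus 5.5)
The plan is to check the three hybrid basic conditions of \cite{SCC.Goebel.Sanfelice.ea2012} one at a time: (A1) $\mathcal C$ and $\mathcal D$ are closed; (A2) $F$ is outer semicontinuous and locally bounded on $\mathcal C$, with nonempty convex values there; (A3) $G$ is outer semicontinuous and locally bounded on $\mathcal D$, with nonempty values there.

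\textbf{Condition (A1).} First, $\mathcal Z$ is closed, because it is a finite product of closed sets: $\mathcal S$ is compact, $\mathcal U$ is compact, the intervals are closed, and $\{0,1\}$ is finite. Each of $\mathcal C_i$, $\mathcal C_a$, $\mathcal D_i$, $\mathcal D_a$ is the intersection of $\mathcal Z$ with preimages of closed sets under continuous coordinate projections. Examples are $\{q=0,\tau=0\}$ and $\{\xi_s=\Delta_s\}\cup\{\xi_p=\Delta_p\}$. Finite unions of closed sets are closed, so $\mathcal C$ and $\mathcal D$ are closed.

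\textbf{Condition (A2).} Write $F(z)=f(z)+\{(w,0,\dots,0):w\in\mathcal W\}$, where $f$ collects the single-valued part. The entries $Ax+Bu$, $0$, $-q$ are continuous. The entries $(1-q)(-\mu_\ell\xi_\ell+k_\ell\psi_\ell(x,u))$ are continuous because:
\begin{itemize}
\item $\psi_s$ is the maximum of $0$ and $r+aT+\kappa$, and $r$ is continuous by Assumption~\ref{ass:cbf_regularity}, since $h$ and $\alpha$ are continuous and $\nabla h$ is continuous;
\item $\psi_p$ is the maximum of $0$ and $\|u-u_{\mathrm{des}}(x)\|-u_{\mathrm{tol}}$, and $u_{\mathrm{des}}$ is continuous on $\mathcal S$.
\end{itemize}
A continuous map plus a fixed nonempty compact convex set has three properties. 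Its values are nonempty, compact and convex. It is locally bounded. It is outer semicontinuous: if $z_j\to z$ and $f(z_j)+w_j\to v$, compactness of $\mathcal W$ gives a subsequence $w_j\to w\in\mathcal W$, so $v=f(z)+w$. Assumption~\ref{ass:disturbance} supplies $\mathcal W$.

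\textbf{Condition (A3), and the main obstacle.} $G_i$ is single-valued and continuous, so it is fine. For $G_a$, outer semicontinuity and local boundedness follow from Proposition~\ref{prop:filter_well_posed}. That result says $\Psi_\epsilon$ is outer semicontinuous on $\mathcal S$ with nonempty compact values, and its values lie in the compact set $\mathcal U$. The remaining components of $G_a$ are continuous in $z$, and its argument $y$ ranges over $\mathcal S$ in $\mathcal Z$. Condition \eqref{eq:zoh_margin_condition} gives $0<\delta<\delta_{\mathrm{feas}}$, which is exactly the hypothesis Proposition~\ref{prop:filter_well_posed} needs; Assumption~\ref{ass:filter} is implicitly required there as well, so that $\delta_{\mathrm{feas}}>0$ and $\mathcal K^\delta_{\rcbf}$ is nonempty.

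The main obstacle is the overlap $\mathcal D_i\cap\mathcal D_a$. It is empty, because $q=0$ on $\mathcal D_i$ and $q=1$ on $\mathcal D_a$, and $q$ takes only these two values. Hence $\mathcal D_i$ and $\mathcal D_a$ are relatively clopen in $\mathcal D$, and $G$ is locally either $G_i$ or $G_a$. This gives a closed graph for $G$ on $\mathcal D$, since each piece has a closed graph over a closed domain. Local boundedness also holds, because the whole image lies in the compact set $\mathcal S\times\mathcal U\times\mathcal S\times[0,\Delta_s]\times[0,\Delta_p]\times[0,T]\times\{0,1\}$.

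The last thing to check is that jumps stay in $\mathcal Z$:
\begin{itemize}
\item $G_a$ lands in $\mathcal Z$ since $v\in\mathcal U$ and $y\in\mathcal S$;
\item $G_i$ keeps $\xi_\ell\le\Delta_\ell$ and $\tau=T\in[0,T]$.
\end{itemize}
This last check is not required for the basic conditions themselves, but it confirms that the jump map is well defined on $\mathcal D$.
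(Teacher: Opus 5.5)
Your proposal is correct and follows essentially the same route as the paper: $\mathcal C$ and $\mathcal D$ are closed in the compact set $\mathcal Z$; $F$ is outer semicontinuous because it is a continuous single-valued part plus the compact convex $\mathcal W$; $G_a$ is outer semicontinuous by Proposition~\ref{prop:filter_well_posed}; and $G_i$ and $G_a$ glue together because $q\in\{0,1\}$ separates $\mathcal D_i$ from $\mathcal D_a$, where your relative-clopen argument is the paper's eventually-constant-mode argument. One small correction: Assumption~\ref{ass:filter} is not an extra hidden hypothesis, since \eqref{eq:zoh_margin_condition} with $\kappa>0$ already forces $\delta_{\mathrm{feas}}>\delta>0$.
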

\begin{proof}
The sets $\mathcal C$ and $\mathcal D$ are closed subsets of the compact set $\mathcal Z$. Continuity of the flow vector in $(z,w)$ and compactness and convexity of $\mathcal W$ imply that $F$ is outer semicontinuous and locally bounded on $\mathcal C$, with nonempty convex values. Proposition~\ref{prop:filter_well_posed} gives nonempty compact values and outer semicontinuity of $\Psi_\epsilon$. Hence, $G_i$ and $G_a$ are outer semicontinuous and locally bounded on their respective domains, with nonempty values. Since $q\in\{0,1\}$, any convergent sequence in $\mathcal D$ has an eventually constant mode component. Hence, such a sequence eventually lies entirely in either $\mathcal D_i$ or $\mathcal D_a$, and the outer semicontinuity of $G_i$ and $G_a$ implies that $G$ is outer semicontinuous on $\mathcal D$.
\end{proof}

To analyze invariance properties of $\mathcal H$, define
    $\Gamma(x,y) \coloneqq \max_{v\in\mathcal{U}} \{r(x,v)-r(y,v)\}$.
Continuity of $r$ and compactness of $\mathcal U$ imply that $\Gamma$ is continuous. Define the sets 
\begingroup\medmuskip=0mu\begin{align}
    \mathcal E_i &\coloneqq
    \{z\in\mathcal C_i:\xi_s\geq\Phi_s(\psi_s(x,u))\}, \label{eq:zoh_idle_tube}\\
    \mathcal E_a &\coloneqq \scalemath{0.97}{\{z\in\mathcal C_a:r(x,u)\leq-a\tau,\ \Gamma(x,y)\leq b(T-\tau)\}},
    \label{eq:zoh_compute_tube}
\end{align}\endgroup
and $\mathcal E\coloneqq\mathcal E_i\cup\mathcal E_a$, where $a$ and $b$ are introduced in \eqref{eq:residual_drift_a} and \eqref{eq:residual_drift_b}. The sets $\mathcal E_i$ and $\mathcal E_a$ are compact. For $z\in\mathcal E_i$, $\Phi_s(\psi_s(x,u))\leq\xi_s\leq\Delta_s\leq\Phi_s(\kappa)$; since $\Phi_s$ is strictly increasing, $\psi_s(x,u)\leq\kappa$, and therefore $r(x,u)\leq-aT<0$. For $z\in\mathcal E_a$, $r(x,u)\leq-a\tau\leq0$.

The next result establishes invariance of $\mathcal E$ under flows and jumps.

\begin{proposition}\label{prop:zoh_preinvariance}
Under \eqref{eq:zoh_margin_condition} and \eqref{eq:zoh_firing_condition}, every solution $\phi$ of $\mathcal H$ with $\phi(0,0)\in\mathcal E$ satisfies $\phi(t,j)\in\mathcal E$ for all $(t,j)\in\operatorname{dom}\phi$.
\end{proposition}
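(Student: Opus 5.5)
We argue by induction over the hybrid time domain: we show that $\mathcal E$ is invariant under every flow interval and under every jump, and then concatenate. For jumps, there are two cases.

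\emph{Initiation jumps from $\mathcal D_i\cap\mathcal E_i$.} Here the state maps to $(x,u,x,\xi_s,\xi_p,T,1)$. Since $z\in\mathcal E_i$, the remark after \eqref{eq:zoh_compute_tube} gives $r(x,u)\leq-aT=-a\tau^+$. Also $\Gamma(x,x)=0=b(T-\tau^+)$. Hence the image lies in $\mathcal E_a$.

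\emph{Application jumps from $\mathcal D_a\cap\mathcal E_a$.} Here $\tau=0$, so $\Gamma(x,y)\leq bT$. For any $v\in\Psi_\epsilon(y)$ we have $r(y,v)\leq-\delta$, and therefore
\[
r(x,v)\leq r(y,v)+\Gamma(x,y)\leq-\delta+bT<-aT-\kappa
\]
by \eqref{eq:zoh_margin_condition}. Thus $\psi_s(x,v)=0$, so $\Phi_s(\psi_s)=0\leq\xi_s^+=0$. The image $(x,v,y,0,0,0,0)$ therefore lies in $\mathcal C_i\cap\mathcal E_i$.

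For flows, we split by mode.

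\emph{Application-pending flow ($q=1$).} Along the flow, $u$ and $y$ are constant and $\dot\tau=-1$. By \eqref{eq:residual_drift_a}, $r(x(t),u)$ grows at rate at most $a$ while $-a\tau$ grows at rate exactly $a$. Hence $r(x,u)+a\tau$ is nonincreasing and stays $\leq0$. For the second constraint, for each fixed $v$, \eqref{eq:residual_drift_b} gives $\frac{d}{dt}\bigl(r(x(t),v)-r(y,v)\bigr)\leq b$. Integrating over $[s,t]$ yields $r(x(t),v)-r(y,v)\leq \Gamma(x(s),y)+b(t-s)$. Taking the maximum over $v$ gives $\Gamma(x(t),y)\leq b(T-\tau(s))+b(t-s)=b(T-\tau(t))$. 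Note that $\Gamma$ is a maximum, so we argue on integrated inequalities rather than derivatives.

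\emph{Idle flow ($q=0$).} We must show that $\xi_s\geq\Phi_s(\psi_s(x,u))$ is preserved. Let $\rho(t)\coloneqq\psi_s(x(t),u)$. This function is absolutely continuous, with $\dot\rho\leq a$ a.e. on $\{\rho>0\}$, and it satisfies $\rho\leq\kappa$ while in $\mathcal E_i$. Consider the comparison function $g(t)\coloneqq\xi_s(t)-\Phi_s(\rho(t))$. A computation shows that $\Phi_s$ solves $\Phi_s'(\rho)\,a=-\mu_s\Phi_s(\rho)+k_s\rho$ with $\Phi_s(0)=0$. This follows from $\Phi_s'=\frac{k_s}{\mu_s}(1-e^{-\mu_s\rho/a})$, and similarly when $\mu_s=0$. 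Hence, a.e. on $\{\rho>0\}$,
\[
\dot g=-\mu_s\xi_s+k_s\rho-\Phi_s'(\rho)\dot\rho\geq-\mu_s\xi_s+k_s\rho-a\Phi_s'(\rho)=-\mu_s g .
\]
Here we used $\Phi_s'\geq0$ and $\dot\rho\leq a$. On $\{\rho=0\}$ we have $\Phi_s(\rho)=0\leq\xi_s$ trivially, because $\xi_s\geq0$ is preserved: when $\xi_s=0$, $\dot\xi_s\geq0$. A Grönwall/comparison argument then gives $g(t)\geq e^{-\mu_s(t-s)}g(s)\geq0$. To make this rigorous across entries and exits of $\{\rho>0\}$, we restart the comparison at the last time $\rho=0$ before $t$; at that time $g\geq0$.

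The remaining component constraints, such as $\xi_s\leq\Delta_s$, are part of $\mathcal Z\supseteq$ the solution's range by the definition of solutions to $\mathcal H$, since flows remain in $\mathcal C$. The same holds for $x\in\mathcal S$, which is needed to apply \eqref{eq:residual_drift_a} and \eqref{eq:residual_drift_b}. Concatenating flow intervals and jumps over $\operatorname{dom}\phi$ completes the proof.

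The main obstacle is the idle-mode comparison step. It requires handling the nonsmooth $\max\{0,\cdot\}$ in $\psi_s$ and justifying the chain rule for $\Phi_s\circ\rho$ a.e. This is fine because $\Phi_s$ is $C^1$ and $\rho$ is absolutely continuous, being locally Lipschitz along the Carathéodory trajectory.
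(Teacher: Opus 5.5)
Your proposal is correct and follows the paper's proof essentially step for step: the same two jump cases (using $\Gamma(x,x)=0$ and $r(x,v)\le r(y,v)+\Gamma(x,y)\le-\delta+bT<-aT-\kappa$), the same integrated bounds in the application-pending mode, and the same comparison function $\xi_s-\Phi_s(\psi_s(x,u))$ with the identity $a\Phi_s'(\rho)=-\mu_s\Phi_s(\rho)+k_s\rho$ in the idle mode. The only difference is cosmetic: the paper notes that $\frac{d}{dt}\psi_s(x(t),u)\le a$ almost everywhere on the whole idle interval, because $s\mapsto\max\{0,s\}$ is nondecreasing and $1$-Lipschitz, so it does not need your restart at the last zero of $\rho$; your restart argument is also valid.
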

\begin{proof}
Consider an idle flow on a hybrid interval $I_j$ for a fixed $j$ beginning at $(t_0,j)$ with $\phi(t_0,j)\in\mathcal E_i$ and define $\sigma(t)\coloneqq\psi_s(x(t,j),u(t,j))$ and $\nu(t)\coloneqq\xi_s(t,j)-\Phi_s(\sigma(t))$. Both functions are locally absolutely continuous on $I_j$. Since $u$ is constant during idle flow, \eqref{eq:residual_drift_a} and the fact that $s\mapsto\max\{0,s\}$ is nondecreasing and $1$-Lipschitz imply that $\dot{\sigma}(t)\leq a$ almost everywhere. Moreover, $\Phi_s'(\rho)\geq0$ and $a\Phi_s'(\rho)=-\mu_s\Phi_s(\rho)+k_s\rho$ for $\rho\geq0$. Hence $\dot{\nu}\geq-\mu_s\nu$ almost everywhere on $I_j$, and therefore $\nu(t)\geq e^{-\mu_s(t-t_0)}\nu(t_0)\geq0$ for all $t\geq t_0$ in $I_j$. Thus $\phi(t,j)\in\mathcal E_i$ throughout the idle flow.

Now consider an application-pending flow beginning at $(t_0,j)$ with $\phi(t_0,j)\in\mathcal E_a$. For $s\geq0$ such that $(t_0+s,j)\in\operatorname{dom}\phi$, $\tau(t_0+s,j)=\tau(t_0,j)-s$. Integrating \eqref{eq:residual_drift_a} over $[t_0,t_0+s]$ gives $r(x(t_0+s,j),u(t_0+s,j))\leq r(x(t_0,j),u(t_0,j))+as\leq-a\tau(t_0+s,j)$. Moreover, for every $v\in\mathcal U$, integrating \eqref{eq:residual_drift_b} over $[t_0,t_0+s]$ and using that $y$ is constant during the application-pending flow gives $r(x(t_0+s,j),v)-r(y(t_0+s,j),v)\leq\Gamma(x(t_0,j),y(t_0,j))+bs\leq b(T-\tau(t_0+s,j))$. Taking the maximum over $v\in\mathcal U$ yields the second defining inequality of $\mathcal E_a$.

At an initiation jump from $\mathcal E_i$, $r(x,u)\leq-aT$, $y^+=x$, and $\tau^+=T$. Since $\Gamma(x,x)=0$, every jump image belongs to $\mathcal E_a$. At an application jump from $\mathcal E_a$, $\tau=0$ and every $v\in\Psi_\epsilon(y)$ satisfies
$r(x,v)\leq r(y,v)+\Gamma(x,y)\leq-\delta+bT<-aT-\kappa$.
Hence $\psi_s(x,v)=0$ and $\xi_s^+=0=\Phi_s(0)$, so every jump image belongs to $\mathcal E_i$.

Since $\phi(0,0)\in\mathcal E=\mathcal E_i\cup\mathcal E_a$, the corresponding flow argument applies to the initial hybrid interval. Each initiation jump maps $\mathcal E_i$ into $\mathcal E_a$, and each application jump maps $\mathcal E_a$ into $\mathcal E_i$. Repeating the corresponding flow and jump arguments over successive hybrid intervals yields $\phi(t,j)\in\mathcal E$ for all $(t,j)\in\operatorname{dom}\phi$.
\end{proof}

The following lemma establishes local existence from every point in $\mathcal E$.

\begin{lemma}\label{lem:zoh_existence}
For every $z_0\in\mathcal E$, there exists a nontrivial solution $\phi$ of $\mathcal H$ with $\phi(0,0)=z_0$, i.e., $\operatorname{dom}\phi\neq\{(0,0)\}$.
\end{lemma}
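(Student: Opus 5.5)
We will use the standard existence criterion for hybrid inclusions satisfying the hybrid basic conditions (Proposition~\ref{prop:zoh_basic_conditions}), e.g.\ \cite[Proposition~6.10]{SCC.Goebel.Sanfelice.ea2012}. It states that a nontrivial solution exists from $z_0\in\mathcal C\cup\mathcal D$ if either $z_0\in\mathcal D$ with $G(z_0)\neq\emptyset$, or $z_0\in\mathcal C$ and the viability condition $F(z)\cap T_{\mathcal C}(z)\neq\emptyset$ holds on a neighborhood of $z_0$ in $\mathcal C\setminus\mathcal D$. Rather than checking tangent cones abstractly, we plan to construct a short solution directly where convenient. We split into three cases: $z_0\in\mathcal D$, $z_0\in\mathcal E_a\setminus\mathcal D$, and $z_0\in\mathcal E_i\setminus\mathcal D$.

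\textbf{Case $z_0\in\mathcal D$.} A jump is possible.
\begin{itemize}
\item On $\mathcal D_i$, $G_i(z_0)$ is a single point of $\mathcal Z$, since $x\in\mathcal S$ and $\tau=T$.
\item On $\mathcal D_a$, $G_a(z_0)$ is nonempty because $\Psi_\epsilon(y)\neq\emptyset$ by Proposition~\ref{prop:filter_well_posed}, using $y\in\mathcal S$.
\end{itemize}
So a one-jump solution exists. Proposition~\ref{prop:zoh_preinvariance} already shows the image stays in $\mathcal E$, although that is not needed for existence.

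\textbf{Case $z_0\in\mathcal E_a\setminus\mathcal D$.} Here $q=1$ and $\tau_0>0$. We take a constant disturbance $w=0$ and solve $\dot x=Ax+Bu$ with $u,y$ frozen, $\xi$ frozen, and $\dot\tau=-1$. This is a classical ODE, so a solution exists for small time. The only constraint to verify is $x(t)\in\mathcal S$, because $\tau$ stays in $(0,T]$ for small $t$.
\begin{itemize}
\item By the definition of $\mathcal E_a$, $r(x_0,u)\le -a\tau_0<0$.
\item If $h(x_0)>0$, continuity keeps $x(t)$ in $\mathcal S$ for small $t$.
\item If $h(x_0)=0$, then $\nabla h(x_0)(Ax_0+Bu)\ge \bar d\|\nabla h(x_0)\|-r(x_0,u)>0$. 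Since $\alpha(0)=0$, this derivative is positive, so $h(x(t))>0$ for small $t>0$.
\end{itemize}
Hence the flow remains in $\mathcal C_a$ on some $[0,\varepsilon]$.

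\textbf{Case $z_0\in\mathcal E_i\setminus\mathcal D$.} Here $\xi_s<\Delta_s$ and $\xi_p<\Delta_p$. Again we use $w=0$ and the Carathéodory (indeed locally Lipschitz on a neighborhood) ODE for $(x,\xi_s,\xi_p)$. The drive functions $\psi_s,\psi_p$ are continuous and the LIF states satisfy $\xi_\ell\ge0$, so the threshold constraints hold for short time. For nonnegativity, if $\xi_\ell=0$ then $\dot\xi_\ell=k_\ell\psi_\ell\ge0$. Because the LIF equations are scalar linear with nonnegative forcing, the explicit variation-of-constants formula gives $\xi_\ell(t)\ge0$. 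For safety of $x$, $\mathcal E_i$ gives $r(x_0,u)\le -aT$. When $T>0$ this is strictly negative, and the same boundary argument as above keeps $x$ in $\mathcal S$. If $aT=0$, we would instead use $\kappa$: membership in $\mathcal E_i$ only yields $r\le0$, so we fall back on Nagumo/\cite[Theorem~1]{SCC.Usevitch.Garg.ea2020}-type viability with the tangent cone condition $\nabla h(x_0)(Ax_0+Bu)\ge0$. We would then invoke the viability theorem on the product constraint set.

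\textbf{Main obstacle.} The delicate point is the boundary $h(x_0)=0$ combined with the product structure of $\mathcal C$. The plan is to reduce all other coordinates to strictly interior or trivially viable constraints (frozen components, $\tau$ decreasing from a positive value, LIF states bounded via the explicit linear formula). The only genuine state constraint left is on $x$, which the strict residual inequality $r(x_0,u)<0$ makes transversal.
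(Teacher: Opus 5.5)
Your proof is correct and follows the paper's skeleton. The jump case rests on nonemptiness of $G$ (via Proposition~\ref{prop:filter_well_posed} for $G_a$). Off $\mathcal D$, the strict inequality $r(x_0,u_0)<0$ makes $\partial\mathcal S$ transversal: it follows from $-aT<0$ in idle mode and from $-a\tau_0<0$ in application-pending mode. Meanwhile $\tau$ and the LIF states stay admissible for a short time.

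Where you differ is in producing the flow arc. The paper keeps the full disturbance set and takes local solutions of $\dot x\in\mathcal F_{u_0}(x)$. It establishes $\nabla h(x)v\geq c>0$ uniformly over a neighborhood of $x_0$ and all $v\in\mathcal F_{u_0}(x)$, then concludes via condition (VC) of the hybrid existence result. You instead fix the admissible selection $w=0$ and solve a classical linear ODE. You then check membership in $\mathcal C$ directly, using only that $t\mapsto h(x(t))$ is $C^1$ with positive derivative at $t=0$.

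The lemma asserts only existence, and the completeness proof of Theorem~\ref{thm:zoh} needs only one continuation to concatenate. So your construction suffices, and it is more elementary. It needs no differential-inclusion existence theorem, and no tangent-cone check at every point of a neighborhood of $z_0$ in $\mathcal C$, which (VC) formally requires. The paper's version gives more: every disturbance realization, not just $w=0$, remains in $\mathcal S$ for a short time, which matches the robust setting.

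Two clean-ups are worth making. First, drop the $aT=0$ fallback. It is vacuous because $a>0$ and $T>0$ by construction, and as written it would not work anyway, since tangency at the single point $x_0$ does not give viability. Second, $u_{\mathrm{des}}$, and hence $\psi_p$, is defined only on $\mathcal S$. So in the idle case, solve the decoupled $x$-equation first and then the scalar linear $\xi_\ell$-equations, rather than calling the joint system locally Lipschitz on a neighborhood.
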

\begin{proof}
If $z_0\in\mathcal D$, then $G(z_0)$ is nonempty, and \cite[Prop.~2.10]{SCC.Goebel.Sanfelice.ea2012} guarantees a nontrivial solution of $\mathcal H$ from $z_0$.

Suppose $z_0\notin\mathcal D$. In the idle mode, $r(x_0,u_0)\leq-aT<0$ and both LIF states are below their firing thresholds. In the application-pending mode, $\tau_0>0$ and $r(x_0,u_0)\leq-a\tau_0<0$. For the fixed input $u_0$, define $\mathcal F_{u_0}(x)\coloneqq\{Ax+Bu_0+w:w\in\mathcal W\}$. The map $\mathcal F_{u_0}$ has nonempty compact convex values and is locally Lipschitz continuous as a set-valued map. Hence, by a standard existence result for differential inclusions \cite[Chp.~2]{SCC.Aubin.Cellina1984}, the inclusion $\dot x\in\mathcal F_{u_0}(x)$ admits an absolutely continuous local solution from $x_0$.

If $x_0\in\operatorname{int}\mathcal S$, continuity of the local solution of $\dot x\in\mathcal F_{u_0}(x)$ implies $x(t)\in\mathcal S$ for all sufficiently small $t>0$. If $x_0\in\partial\mathcal S$, then $h(x_0)=0$ and 
    $\min_{w\in\mathcal W}\nabla h(x_0)(Ax_0+Bu_0+w) =-r(x_0,u_0)>0$.
By continuity of $x\mapsto\min_{w\in\mathcal W}\nabla h(x)(Ax+Bu_0+w)$, there exist a neighborhood $\Omega$ of $x_0$ and $c>0$ such that
    $\nabla h(x)v\geq c$, for all $x\in\Omega$ and $v\in\mathcal F_{u_0}(x)$.
Hence, along any absolutely continuous solution of $\dot x\in\mathcal F_{u_0}(x)$ that remains in $\Omega$, $\dot h(x(t))\geq c$ almost everywhere. Therefore, $h(x(t))\geq h(x_0)+ct\geq0$ for all sufficiently small $t>0$.

In the idle mode, the LIF dynamics preserve nonnegativity, and continuity keeps both LIF states below their firing thresholds for sufficiently small time. In the application-pending mode, the LIF states remain constant and $\tau(t)=\tau_0-t>0$ for sufficiently small time. Hence, condition (VC) of \cite[Prop.~2.10]{SCC.Goebel.Sanfelice.ea2012} holds at $z_0$, and therefore a nontrivial solution $\phi$ of $\mathcal H$ exists with $\phi(0,0)=z_0$.
\end{proof}
\begin{remark}
The jump maps also prevent consecutive jumps without intervening flow: an initiation jump gives $\tau^+=T>0$, whereas an application jump gives $\xi_s^+=\xi_p^+=0$.
\end{remark}

\subsection{Inter-Event Time Bounds}

Each application jump resets $\xi_s$ and $\xi_p$ to zero. We next bound the time required for either LIF state to reach its firing threshold after an application. Let $\gamma\coloneqq\delta-(a+b)T-\kappa>0$. Since $\psi_p$ is continuous and $\mathcal S\times\mathcal U$ is compact, define
\begin{equation}
    \bar\psi_p\coloneqq\max_{(x,u)\in\mathcal S\times\mathcal U}\psi_p(x,u).
    \label{eq:zoh_drive_bounds}
\end{equation}
Let $\rho_\Delta\coloneqq\Phi_s^{-1}(\Delta_s)\in(0,\kappa]$ and
$t_s\coloneqq(\gamma+\rho_\Delta)/a$. Define
\begin{equation}
    t_p \coloneqq
    \begin{cases}
        \dfrac{\Delta_p}{k_p\bar\psi_p},
        & \mu_p=0,\ \bar\psi_p>0,\\[2ex]
        -\dfrac{1}{\mu_p}\ln\left(1-\dfrac{\mu_p\Delta_p}{k_p\bar\psi_p}\right),
        & \mu_p>0,\ k_p\bar\psi_p>\mu_p\Delta_p,\\[2ex]
        +\infty,
        & \text{otherwise}.
    \end{cases}
    \label{eq:zoh_neuron_dwell}
\end{equation}
The following lemma establishes the event-separation bounds.
\begin{lemma}\label{lem:zoh_miet}
For every solution $\phi$ of $\mathcal H$ with $\phi(0,0)\in\mathcal E$, after every application jump, the next initiation, if it occurs, is separated by at least
\begin{equation}
    t_{\mathrm{idle}}\coloneqq\min\{t_s,t_p\}>0.
    \label{eq:idle_bound}
\end{equation}
Consecutive jumps are separated by at least $\min\{T,t_{\mathrm{idle}}\}$, and consecutive initiation events and consecutive application events are separated by at least $T+t_{\mathrm{idle}}$.
\end{lemma}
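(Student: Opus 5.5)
By Proposition~\ref{prop:zoh_preinvariance}, every solution starting in $\mathcal E$ stays in $\mathcal E$, so the whole argument can use the properties of $\mathcal E$ along the solution. The plan is to bound, after each application jump, the time each LIF state needs to climb from $0$ to its threshold. The separation claims then follow from the fixed duration $T$ of the application-pending mode.

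Fix an application jump at hybrid time $(t_0,j)$. From the proof of Proposition~\ref{prop:zoh_preinvariance}, the new input $v$ satisfies $r(x(t_0),v)\le-\delta+bT = -aT-\kappa-\gamma$, and $\xi_s=\xi_p=0$. During the following idle flow, $u=v$ is constant and $x$ remains in $\mathcal S$. Integrating \eqref{eq:residual_drift_a} gives $r(x(t),v)\le -aT-\kappa-\gamma+a(t-t_0)$. So $\psi_s$ vanishes on $[t_0,t_0+\gamma/a]$, $\xi_s$ stays at $0$ there (since $\dot\xi_s=-\mu_s\xi_s$), and afterwards $\psi_s(x(t),v)\le a(t-t_0)-\gamma$.

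By comparison with the scalar ODE $\dot\eta=-\mu_s\eta+k_s\max\{0,a(t-t_0)-\gamma\}$, $\eta(t_0)=0$, we get $\xi_s(t)\le\eta(t)$. The explicit solution is $\eta(t)=\Phi_s(a(t-t_0)-\gamma)$ for $t\ge t_0+\gamma/a$; this uses the identity $a\Phi_s'(\rho)=-\mu_s\Phi_s(\rho)+k_s\rho$ with $\Phi_s(0)=0$, already noted in the proof of Proposition~\ref{prop:zoh_preinvariance}. Since $\Phi_s$ is strictly increasing, $\xi_s(t)<\Delta_s$ whenever $a(t-t_0)-\gamma<\rho_\Delta$, that is, for $t-t_0<t_s$.

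For $\xi_p$, use $0\le\psi_p\le\bar\psi_p$ and compare with $\dot\zeta=-\mu_p\zeta+k_p\bar\psi_p$, $\zeta(t_0)=0$:
\[
\zeta(t)=\frac{k_p\bar\psi_p}{\mu_p}\bigl(1-e^{-\mu_p(t-t_0)}\bigr)
\]
(or $k_p\bar\psi_p(t-t_0)$ when $\mu_p=0$). Solving $\zeta=\Delta_p$ gives exactly $t_p$ in \eqref{eq:zoh_neuron_dwell}. In the ``otherwise'' case, $\zeta$ never reaches $\Delta_p$: either $\bar\psi_p=0$, or the asymptote $k_p\bar\psi_p/\mu_p$ is at most $\Delta_p$ and is approached only from below. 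Hence $\xi_p<\Delta_p$ for $t-t_0<t_p$.

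Since an initiation jump requires $\xi_s=\Delta_s$ or $\xi_p=\Delta_p$, it cannot occur before $t_0+\min\{t_s,t_p\}$. Positivity holds because $t_s\ge\rho_\Delta/a>0$ and $t_p>0$ (the log argument lies in $(0,1)$). One subtlety must be checked: the comparison lemma applies to the Carathéodory inequalities on the hybrid interval, which requires $\xi$ to be absolutely continuous and the drive signals to be measurable; both hold.

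For the remaining claims, the only jump sequence possible is an alternation of initiation and application jumps. After an initiation, the flow $\dot\tau=-1$ from $\tau=T$ gives exactly $T$ ordinary time before the application jump. After an application, at least $t_{\mathrm{idle}}$ elapses before the next initiation. This yields the bound $\min\{T,t_{\mathrm{idle}}\}$ for consecutive jumps and $T+t_{\mathrm{idle}}$ for consecutive events of the same type. I expect no serious obstacle beyond justifying the comparison step for the safety LIF via the identity for $\Phi_s$.
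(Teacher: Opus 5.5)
Your proposal is correct and follows essentially the same argument as the paper. Both use the post-application residual bound $r\le-aT-\kappa-\gamma$, a variation-of-constants/comparison step giving $\xi_s\le\Phi_s(\max\{0,as-\gamma\})$, the constant-drive comparison for $\xi_p$, and the exact duration $T$ of the application-pending mode together with the forced alternation of initiation and application jumps. Beyond the paper, you also make explicit the identity $a\Phi_s'=-\mu_s\Phi_s+k_s\rho$ behind the comparison, the ``otherwise'' case of $t_p$, and the positivity of $t_{\mathrm{idle}}$, which the paper leaves implicit.
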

\begin{proof}
Let $\phi$ be a solution of $\mathcal H$ with $\phi(0,0)\in\mathcal E$. Fix an application jump at ordinary time $t_0$, and let $j$ denote the post-jump discrete index. By Proposition~\ref{prop:zoh_preinvariance}, $\phi(t,j)\in\mathcal E$ throughout its domain; hence the pre-jump state at the application event belongs to $\mathcal E_a$. By the definition of $\mathcal E_a$, the application jump map, and $v\in\Psi_\epsilon(y)$, $r(x(t_0,j),u(t_0,j))\leq-\delta+bT=-aT-\kappa-\gamma$, while $\xi_s(t_0,j)=\xi_p(t_0,j)=0$.

During the subsequent idle flow, integrating \eqref{eq:residual_drift_a} over $[t_0,t_0+s]$ gives $\psi_s(x(t_0+s,j),u(t_0+s,j))\leq\max\{0,as-\gamma\}$. Since the input is constant on the idle interval, variation of constants yields $\xi_s(t_0+s,j) \leq \Phi_s\!\left(\max\{0,as-\gamma\}\right)$. Since $\Delta_s=\Phi_s(\rho_\Delta)$ and $\Phi_s$ is strictly increasing, $\xi_s(t_0+s,j)<\Delta_s$ for all $s<t_s$. Thus, a safety-driven initiation cannot occur before $t_s$.

For the performance LIF state, $\psi_p(x,u)\leq\bar\psi_p$ on $\mathcal S\times\mathcal U$. Since $\xi_p(t_0,j)=0$, scalar comparison gives
\begin{equation}
    \xi_p(t_0+s,j) \leq
    \begin{cases}
        k_p\bar\psi_p s, & \mu_p=0,\\[1ex]\dfrac{k_p\bar\psi_p}{\mu_p}\left(1-e^{-\mu_p s}\right), & \mu_p>0,
    \end{cases}
    \label{eq:zoh_charge_upper_bound}
\end{equation}
for every $s\geq0$ such that $(t_0+s,j)\in\operatorname{dom}\phi$ and the solution remains in the idle mode. By the definition of $t_p$, $\xi_p(t_0+s,j)<\Delta_p$ for all $s<t_p$. Hence no initiation occurs before $t_{\mathrm{idle}}=\min\{t_s,t_p\}$.

At an initiation jump, $\tau$ is reset to $T$. During the subsequent application-pending flow, $\dot\tau=-1$, so the corresponding application jump occurs exactly $T$ units of ordinary time later. Therefore, consecutive jumps are separated by at least $\min\{T,t_{\mathrm{idle}}\}$, while consecutive initiation events and consecutive application events are separated by at least $T+t_{\mathrm{idle}}$.
\end{proof}

\subsection{Safety, Completeness, and Execution Rate}

The following theorem combines the invariance, continuation, and inter-event-time results established above.

\begin{theorem}\label{thm:zoh}
Suppose Assumptions~\ref{ass:disturbance}--\ref{ass:filter} hold, the bounds \eqref{eq:residual_drift_a}--\eqref{eq:residual_drift_b} hold, and the parameters satisfy \eqref{eq:zoh_margin_condition} and \eqref{eq:zoh_firing_condition}. Then every maximal solution $\phi$ of $\mathcal H$ with $\phi(0,0)\in\mathcal E$ remains in $\mathcal E$ and has unbounded ordinary-time domain. Consequently, $x(t,j)\in\mathcal S$ and $r(x(t,j),u(t,j))\leq0$ for all $(t,j)\in\operatorname{dom}\phi$. Moreover, $\mathcal H$ has no Zeno solutions. If $N_i(t)$ denotes the number of computation initiations in $[0,t]$, then
\begin{equation}\label{eq:boundN}
    N_i(t)\leq1+\left\lfloor\frac{t}{T+t_{\mathrm{idle}}}\right\rfloor.
\end{equation}
\end{theorem}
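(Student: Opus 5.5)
The plan is to assemble the theorem from Propositions~\ref{prop:zoh_basic_conditions} and \ref{prop:zoh_preinvariance}, Lemmas~\ref{lem:zoh_existence} and \ref{lem:zoh_miet}, and the standard completeness characterization \cite[Prop.~2.10]{SCC.Goebel.Sanfelice.ea2012}. Invariance of $\mathcal E$ along every solution starting in $\mathcal E$ is exactly Proposition~\ref{prop:zoh_preinvariance}. The consequences $x\in\mathcal S$ and $r(x,u)\leq0$ then follow from $\mathcal E\subset\mathcal Z$ and from the observations after \eqref{eq:zoh_compute_tube}: on $\mathcal E_i$ we have $r\leq-aT$, and on $\mathcal E_a$ we have $r\leq-a\tau\leq0$.

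For completeness, I invoke \cite[Prop.~2.10]{SCC.Goebel.Sanfelice.ea2012}. By Proposition~\ref{prop:zoh_basic_conditions}, a maximal solution either is complete, or blows up in finite time, or leaves $\mathcal C\cup\mathcal D$ through a jump. Blow-up is impossible because $\mathcal E$ is compact and the solution stays in it. Leaving via a jump is impossible because $G_i(\mathcal D_i\cap\mathcal E)$ and $G_a(\mathcal D_a\cap\mathcal E)$ lie in $\mathcal E\subset\mathcal C$, as shown in the jump part of Proposition~\ref{prop:zoh_preinvariance}. The remaining case is a solution that ends at a point of $\mathcal E\setminus\mathcal D$ where it cannot flow. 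Lemma~\ref{lem:zoh_existence} gives a nontrivial continuation from every point of $\mathcal E$, so this contradicts maximality: concatenating the continuation would extend the solution. Hence every maximal solution is complete.

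Next I upgrade completeness to an unbounded ordinary-time domain, which also rules out Zeno behavior. By Lemma~\ref{lem:zoh_miet}, consecutive jumps are separated by at least $\min\{T,t_{\mathrm{idle}}\}>0$. Hence $t\geq (j-1)\min\{T,t_{\mathrm{idle}}\}$ for every $(t,j)\in\operatorname{dom}\phi$. A complete domain that is bounded in $t$ would need infinitely many jumps in bounded time, which this bound forbids. So $\sup_t\operatorname{dom}\phi=\infty$, and no solution of $\mathcal H$ starting in $\mathcal E$ is Zeno. Zeno solutions starting outside $\mathcal E$ are not covered; I would read the claim as restricted to solutions from $\mathcal E$, consistent with the statement's hypothesis.

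For \eqref{eq:boundN}, let the initiation times in $[0,t]$ be $t_1<\dots<t_N$. By Lemma~\ref{lem:zoh_miet}, consecutive initiations are at least $T+t_{\mathrm{idle}}$ apart, so $t_N-t_1\geq (N-1)(T+t_{\mathrm{idle}})$. Since $t_N-t_1\leq t$, we get $N-1\leq t/(T+t_{\mathrm{idle}})$, which gives the floor bound. If $t_{\mathrm{idle}}=+\infty$, the bound reads $N_i\leq1$; this is consistent, since by Lemma~\ref{lem:zoh_miet} no initiation follows an application jump, so there is at most one initiation, namely one occurring before the first application.

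I expect the main obstacle to be the completeness step. One must check carefully that the Lemma~\ref{lem:zoh_existence} continuation stays inside $\mathcal C$ during flow: $\tau$ must stay in $[0,T]$, $\xi$ must stay in $[0,\Delta]$, and $x$ must stay in $\mathcal S$. This is what lets the (VC)/maximality argument apply at the terminal point of a putatively non-complete solution. In particular, the argument must handle terminal points in $\mathcal D$, where a jump is forced, and terminal points in $\mathcal E\setminus\mathcal D$, where flow is possible.
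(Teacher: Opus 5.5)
Your proposal is correct and follows essentially the same route as the paper: invariance and the safety consequences from Proposition~\ref{prop:zoh_preinvariance}; completeness by excluding blow-up via compactness of $\mathcal E$ and extending any terminal point by a jump (if it lies in $\mathcal D$) or by Lemma~\ref{lem:zoh_existence} (otherwise); and both Zeno exclusion and \eqref{eq:boundN} from the dwell times in Lemma~\ref{lem:zoh_miet}. The difference is presentational: you phrase completeness through the trichotomy of \cite[Prop.~2.10]{SCC.Goebel.Sanfelice.ea2012}, whose stated hypothesis requires (VC) on all of $\mathcal C\setminus\mathcal D$, which is not established outside $\mathcal E$. Your extra ``remaining case'' step via Lemma~\ref{lem:zoh_existence} is therefore what makes the argument valid, whereas the paper argues directly by contradiction from a finite terminal time.
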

\begin{proof}
Proposition~\ref{prop:zoh_preinvariance} implies strong forward invariance of $\mathcal E$, and Lemma~\ref{lem:zoh_existence} implies the existence of local solutions starting from every point in $\mathcal E$. By Lemma~\ref{lem:zoh_miet}, every bounded ordinary-time interval contains finitely many jumps.

For the sake of contradiction, suppose a maximal solution has finite terminal ordinary time. Since only finitely many jumps can occur before that time, the solution has a final flow interval. Because $\mathcal E$ is compact and $F$ is locally bounded on $\mathcal C$, $F$ is bounded on $\mathcal E$. Hence, for the last discrete index $j$, the flow arc $\phi(\cdot,j)$ is Lipschitz and has a limit $z_*\in\mathcal E$ at the terminal time. If $z_*\in\mathcal D$, a jump extends the solution. If $z_*\notin\mathcal D$, Lemma~\ref{lem:zoh_existence} gives a positive-duration flow extension. Both contradict maximality. Thus, every maximal solution has unbounded ordinary-time domain.

By Proposition~\ref{prop:zoh_preinvariance}, $\phi(t,j)\in\mathcal E$ for all $(t,j)\in\operatorname{dom}\phi$. On $\mathcal E_i$, $r(x,u)\leq-aT<0$, while on $\mathcal E_a$, $r(x,u)\leq-a\tau\leq0$. Since the $x$-projection of $\mathcal E$ is contained in $\mathcal S$, the safety claims follow. Lemma~\ref{lem:zoh_miet} gives a positive lower bound on the separation between consecutive jumps and therefore excludes Zeno solutions. The same lemma gives the bound on $N_i(t)$ in \eqref{eq:boundN}.
\end{proof}

\begin{remark}
For every $x_0\in\mathcal S$ and $u_0\in\mathcal K_{\rcbf}^{\delta}(x_0)$, the initial state $z_0=(x_0,u_0,x_0,0,0,0,0)$ belongs to $\mathcal E$. Indeed, $r(x_0,u_0)\leq-\delta<-aT-\kappa$ by \eqref{eq:zoh_margin_condition}. Hence $\psi_s(x_0,u_0)=0$ and $\xi_s(0)=0=\Phi_s(0)$, so $z_0\in\mathcal E_i\subset\mathcal E$.
\end{remark}

The next section presents numerical experiments evaluating the safety guarantees, delayed input application, and event-triggered computation rates established above.

\section{Numerical Experiment}
\label{sec:numerical_experiments}

A spacecraft terminal-approach example is used to compare the developed Dual-LIF safety filter with the unfiltered nominal controller (\emph{No-CBF}), a periodic CBF filter (\emph{P-CBF-QP}) \cite{SCC.Ames.Xu.ea2017}, and a delay-tightened periodic CBF filter (\emph{TP-CBF-QP}) \cite{SCC.Singletary.Chen.ea2020}. Under the timing convention in Fig.~\ref{fig:timing}, P-CBF-QP enforces $r(y,u)\leq0$, whereas TP-CBF-QP and Dual-LIF enforce $r(y,u)\leq-\delta$. The periodic filters initiate a QP solve every $T$ seconds, while Dual-LIF initiates a solve when either LIF state reaches its threshold.

The CBF-QPs are solved using \texttt{quadprog} with $\epsilon=0$. Continuous dynamics are integrated using fixed-step fourth-order Runge--Kutta with $\Delta t=T/20$, and LIF thresholds are checked at the integration instants. No-CBF applies the saturated nominal feedback directly.

\subsection{Spacecraft Terminal-Approach Experiment}
\label{subsec:exp_spacecraft_safety}

Consider a spacecraft approaching a non-maneuvering target in a circular reference orbit, modeled using the planar Clohessy--Wiltshire dynamics \cite{SCC.Clohessy.Wiltshire.ea1960}, similar to the relative-motion dynamics used in \cite{BreedenGuaranteed2022}. Let $p_x,p_y$ denote the relative position components and $v_x,v_y$ the corresponding relative velocities. With normalized state $x=[p_x/p_0,\;p_y/p_0,\;v_x/v_0,\;v_y/v_0]^\top$, the dynamics have the form \eqref{eq:plant}, where
\begin{equation}
    A=
    \begin{bmatrix}
        0&0&v_0/p_0&0\\
        0&0&0&v_0/p_0\\
        3\omega^2p_0/v_0&0&0&2\omega\\
        0&0&-2\omega&0
    \end{bmatrix},
    \quad
    B=\frac{a_0}{v_0}
    \begin{bmatrix}
        0_{2\times2}\\
        I_2
    \end{bmatrix}.
\end{equation}
The system parameters are $\omega=1.10\times10^{-3}$~s$^{-1}$, $p_0=10$~m, $v_0=0.2$~m/s, $a_0=0.02$~m/s$^2$, and $\mathcal U=[-1,1]^2$. The disturbance is $d(t)=[0,\;0,\;\bar d\sin t,\;0]^\top$ with $\bar d=0.005$, which satisfies $\|d(t)\|\leq\bar d$. Safety is defined by $h(x)=1-x^\top Px$, with $P=\begin{bmatrix}1.25&0.25\\0.25&0.375\end{bmatrix}\otimes I_2$, so $\mathcal S=\{x:x^\top Px\leq1\}$, where $\otimes$ denotes the Kronecker product. The nominal controller is $u_{\mathrm{des}}(x)=\operatorname{sat}_{\mathcal U}(K_{\mathrm{des}}x)$ with $K_{\mathrm{des}}=2\left[\begin{smallmatrix} -4.00&0&-0.85&-0.022\\ 0&-4.00&0.022&-0.85 \end{smallmatrix}\right]$ and $\alpha(s)=0.5s$. 

The bounds in \eqref{eq:residual_drift_a}--\eqref{eq:residual_drift_b} are evaluated numerically over $\mathcal S$, giving $a=b=0.1338552$. We select $T=0.05$~s, $\kappa=0.001$, and $\delta=1.532\times10^{-2}$, with $\delta_{\mathrm{feas}}\approx1.626\times10^{-2}$ so that \eqref{eq:zoh_margin_condition} holds. The Dual-LIF parameters are $(\mu_s,k_s)=(0.1,100)$, $(\mu_p,k_p)=(5,20)$, $u_{\mathrm{tol}}=0.1$, and $\Delta_p=0.5$. The safety threshold is $\Delta_s=0.001\Phi_s(\kappa)=3.734\times10^{-7}$, which satisfies \eqref{eq:zoh_firing_condition}. With $\Delta t=0.0025$~s, $a\Delta t=3.346\times10^{-4}<\kappa-\Phi_s^{-1}(\Delta_s)=9.684\times10^{-4}$, so threshold detection at the integration instants preserves the required initiation bound. The simulation starts near the boundary of $\mathcal{S}$ at $x(0)=[0.63245,\;0.63245,\;0,\;0]^\top$, for which $h(x(0))=10^{-5}$, and runs for $120$~s.

\begin{figure}[htbp!]
    \centering
    \begin{tikzpicture}

\def\datadir{data}

\begin{axis}[
     width=\columnwidth,
    height=0.5\columnwidth,
    xmin=0,
    grid=major,
    grid style={gray!20},
    axis line style={black},
    clip=false,
    clip marker paths=false,
    ylabel={$\|p\|/p_0$},
    xlabel={$t/t_f$},
    legend pos=north east,
    legend style={nodes={scale=0.75,transform shape}},
]

\addplot[blue] table[x=tau, y=pn_nocbf, col sep=tab] {\datadir/norms.dat};
\addlegendentry{No-CBF}

\addplot[red] table[x=tau, y=pn_pcbfqp, col sep=tab] {\datadir/norms.dat};
\addlegendentry{P-CBF-QP}

\addplot[green!50!black] table[x=tau, y=pn_tpcbfqp, col sep=tab] {\datadir/norms.dat};
\addlegendentry{TP-CBF-QP}

\addplot[violet] table[x=tau, y=pn_duallif, col sep=tab] {\datadir/norms.dat};
\addlegendentry{Dual-LIF}

\addplot[
    only marks,
    mark=diamond*,
    mark size=3.0pt,
    black,
    mark options={fill=black},
    restrict expr to domain={\coordindex}{0:0}
] table[x=tau, y=pn_duallif, col sep=tab] {\datadir/norms.dat};
\addlegendentry{Initial state}

\addplot[
    only marks,
    mark=o,
    mark size=2.5pt,
    black,
    mark options={fill=white}
] coordinates {(1,0)};
\addlegendentry{Target}

\end{axis}
\end{tikzpicture}
    \vspace{-1em}
    \caption{Trajectory of the normalized distance to the target, $\|p(t)\|/p_0$, for the four controllers.}
    \label{fig:exp3_trajectory}
\end{figure}


\begin{figure}[htbp!]
    \centering
    \begin{tikzpicture}

\begin{axis}[
    width=\columnwidth,
    height=0.5\columnwidth,
    xlabel={$t$ (s)},
    ylabel={$r(x,u)$},
    xmin=0,
    xmax=120,
    grid=major,
    grid style={gray!20},
    scaled y ticks=false,
    yticklabel style={
        /pgf/number format/fixed,
        /pgf/number format/precision=3
    },
    legend pos=north east,
    legend style={nodes={scale=0.75,transform shape}},
]


\addplot[blue]
table[
    x=t,
    y=r,
    col sep=tab
]{data/traj_nocbf.dat};

\addplot[red]
table[
    x=t,
    y=r,
    col sep=tab
]{data/traj_pcbf.dat};

\addplot[matlabyellow]
table[
    x=t,
    y=r,
    col sep=tab
]{data/traj_tpcbf.dat};

\addplot[matlabpurple]
table[
    x=t,
    y=r,
    col sep=tab
]{data/traj_lif.dat};


\addplot[
    black,
    dashed,
    thin,
    forget plot
]
coordinates {
    (0,0)
    (120,0)
};


\legend{
    No-CBF,
    P-CBF-QP,
    TP-CBF-QP,
    Dual-LIF
}

\end{axis}

\end{tikzpicture}
    \vspace{-1em}
    \caption{Trajectory of the disturbance-adjusted residual $r(x(t),u(t))$. The dashed line denotes the CBF boundary $r=0$.}
    \label{fig:exp3_residual}
\end{figure}

\begin{figure}[htbp!]
    \centering
    \begin{tikzpicture}

\begin{axis}[
    width=0.96\columnwidth,
    height=0.41\columnwidth,
    xlabel={$t$ (s)},
    xlabel style={
        at={(axis description cs:0.5,-0.27)},
        anchor=north
    },
    xmin=0,
    xmax=120,
    ymin=0.5,
    ymax=2.5,
    ytick={1,2},
    yticklabels={$\xi_p=\Delta_p$,$\xi_s=\Delta_s$},
    grid=major,
    grid style={gray!20},
    yticklabel style={font=\scriptsize}
]

\addplot[
    matlaborange,
    only marks,
    mark=|,
    mark size=2.3pt,
    line width=0.8pt,
    restrict expr to domain={\thisrow{type}}{0:0}
]
table[
    x=t,
    y expr=1,
    col sep=tab
]{data/lif_triggers.dat};

\addplot[
    matlabblue,
    only marks,
    mark=|,
    mark size=2.3pt,
    line width=0.8pt,
    restrict expr to domain={\thisrow{type}}{1:1}
]
table[
    x=t,
    y expr=2,
    col sep=tab
]{data/lif_triggers.dat};

\end{axis}

\end{tikzpicture}
    \vspace{-1em}
    \caption{Dual-LIF computation initiation times.}
    \label{fig:exp3_raster}
\end{figure}

\begin{table}[t]
    \centering
    \caption{Safety metrics and online QP solve counts.}
    \label{tab:num_summary}
    \setlength{\tabcolsep}{4pt}
    \renewcommand{\arraystretch}{1.05}
    \begin{tabular}{@{}lrrr@{}}
        \toprule
        Controller & $\min h$ & $\max r$ [s$^{-1}$] & QP Solves \\
        \midrule
        No-CBF    & $-6.060\times10^{-1}$ & $3.583\times10^{-1}$  & 0 \\
        P-CBF-QP  & $1.000\times10^{-5}$  & $4.110\times10^{-3}$  & 2401 \\
        TP-CBF-QP & $1.000\times10^{-5}$  & $-1.160\times10^{-2}$ & 2401 \\
        \rowcolor{lifshade}
        Dual-LIF  & $1.000\times10^{-5}$  & $-5.617\times10^{-3}$ & 142 \\
        \bottomrule
    \end{tabular}
\end{table}

\section{Discussion}

Figure~\ref{fig:exp3_trajectory} shows that all four controllers approach the target. Figure~\ref{fig:exp3_residual} shows different residual behavior across the filtered implementations. P-CBF-QP attains a positive residual, with $\max r=4.124\times10^{-3}$, indicating a temporary violation of the CBF condition under computation delay. In contrast, TP-CBF-QP and Dual-LIF maintain $r(x,u)<0$ throughout the simulation, consistent with the additional margin imposed by the tightened constraint.

Figure~\ref{fig:exp3_raster} shows that $138$ of the $142$ Dual-LIF computations are performance-triggered and $4$ are safety-triggered, so performance updates dominate, while the safety trigger is activated only four times. As summarized in Table~\ref{tab:num_summary}, Dual-LIF reduces the number of QP solution requests from $2401$ to $142$, corresponding to a $94.1\%$ reduction relative to TP-CBF-QP, while maintaining $h(x)\geq0$ and a negative residual. This reduction is specific to the present trajectory, disturbance realization, and trigger tuning.

\section{Conclusion}\label{sec:conclusion}

A dual-LIF architecture is developed for CBF safety filtering with nonzero computation time. Coupling the safety-LIF threshold to residual-growth bounds ensures that computation is initiated before the available residual margin is exhausted. Under the stated assumptions, the hybrid closed loop guarantees robust forward invariance, completeness, and non-Zeno execution.
The spacecraft experiment demonstrates the resulting safety behavior and a $94.1\%$ reduction in online QP solves relative to the delay-tightened periodic implementation.

The analysis assumes exact state information, a known computation-time bound, and sufficient control authority to satisfy the required tightening. Future work will consider state-estimation error, time-varying computation latency, broader operating conditions, and hardware implementation.

\bibliographystyle{IEEETrans.bst}
\bibliography{scc, sccmaster, scctemp, acc_computeconstrained}

\appendix
\section{Technical Proofs}
\label{app:proofs}

\subsection{Proof of Proposition~\ref{prop:filter_well_posed}}
\label{app:prop_filter_proof}

\begin{proof}
Set $\eta\coloneqq\delta_{\mathrm{feas}}-\delta>0$. By the definition of $\delta_{\mathrm{feas}}$, for every $x\in\mathcal S$ there exists $\varpi\in\mathcal U$ such that $r(x,\varpi)\leq-\delta-\eta$. Hence $\mathcal K_{\rcbf}^{\delta}(x)$ is nonempty. Compactness and convexity follow from compactness and convexity of $\mathcal U$, continuity of $r$, and affinity of $r(x,\cdot)$. The graph of $\mathcal K_{\rcbf}^{\delta}$ is closed; since its values are contained in the compact set $\mathcal U$, $\mathcal K_{\rcbf}^{\delta}$ is outer semicontinuous on $\mathcal S$.

To establish inner semicontinuity, let $x_j\to x$ in $\mathcal S$ and $v\in\mathcal K_{\rcbf}^{\delta}(x)$. Define $e_j\coloneqq\max\{0,r(x_j,v)+\delta\}$ and $\theta_j\coloneqq e_j/(\eta+e_j)$. Continuity of $r$ gives $e_j\to0$. Select $\vartheta_j \in\mathcal U$ such that $r(x_j,\vartheta_j)\leq-\delta-\eta$, and set $v_j\coloneqq(1-\theta_j)v+\theta_j\vartheta_j$. By affinity of $r(x_j,\cdot)$,
$r(x_j,v_j)\leq(1-\theta_j)(-\delta+e_j)+\theta_j(-\delta-\eta)=-\delta$.
Thus $v_j\in\mathcal K_{\rcbf}^{\delta}(x_j)$. Since $\theta_j\to0$ and $\mathcal U$ is bounded, $v_j\to v$. Hence, $\mathcal K_{\rcbf}^{\delta}$ is inner semicontinuous on $\mathcal S$.

By Berge's maximum theorem \cite[Section~5.3]{SCC.Bank.Guddat.ea1982}, the inner and outer semicontinuity of $\mathcal K_{\rcbf}^{\delta}$, together with its nonempty compact values and the continuity of $\varphi$, implies continuity of $V^*$ and upper semicontinuity of the optimizer correspondence $x\mapsto\operatorname*{arg\,min}_{v\in\mathcal K_{\rcbf}^{\delta}(x)}\varphi(x,v)$. Since $\varphi(x,\cdot)$ is strictly convex and $\mathcal K_{\rcbf}^{\delta}(x)$ is convex, the optimizer is unique for every $x\in\mathcal S$. Hence the optimizer correspondence is the singleton $\{\pi_\delta(x)\}$, and $\pi_\delta$ is continuous.

Moreover, $\Psi_\epsilon(x)$ is nonempty because $\pi_\delta(x)\in\Psi_\epsilon(x)$. Continuity of $\varphi$ and $V^*$ and closedness of the graph of $\mathcal K_{\rcbf}^{\delta}$ imply that $\Psi_\epsilon$ has a closed graph. Since its values are contained in the compact set $\mathcal U$, $\Psi_\epsilon$ has compact values and is outer semicontinuous on $\mathcal S$.

Finally, optimality of $\pi_\delta(x)$ over the convex set $\mathcal K_{\rcbf}^{\delta}(x)$ gives $(\pi_\delta(x)-u_{\mathrm{des}}(x))^\top(v-\pi_\delta(x))\geq0$. Therefore, $\varphi(x,v)-V^*(x)\geq\frac{1}{2}\|v-\pi_\delta(x)\|^2$. For $v\in\Psi_\epsilon(x)$, $\varphi(x,v)-V^*(x)\leq\epsilon$, and hence $\|v-\pi_\delta(x)\|\leq\sqrt{2\epsilon}$.
\end{proof}

\subsection{Proof of Lemma~\ref{lem:safety_filter_trigger}}
\label{app:lemma_filter_proof}

\begin{proof}
Integrating \eqref{eq:residual_drift_a} over $[t_a,t_b]$ yields $\kappa=r(x(t_b),u)-r(x(t_a),u)\leq a(t_b-t_a)$, so $t_b-t_a\geq\kappa/a$. Moreover, for every $s\in[t_a,t_b]$, integrating \eqref{eq:residual_drift_a} over $[s,t_b]$ yields $r(x(t_b),u)-r(x(s),u)\leq a(t_b-s)$, or equivalently $r(x(s),u)\geq r(x(t_b),u)-a(t_b-s)$, and hence $\psi_s(x(s),u)\geq\max\{0,\kappa-a(t_b-s)\}$.

If $\xi_s(t_a)=\Delta_s$, set $t^*=t_a$. Otherwise, let $t^*$ denote the first time after $t_a$ at which $\xi_s=\Delta_s$, if such a time exists. Suppose no such time exists in $[t_a,t_b]$. Then, by hypothesis, $\xi_s$ evolves according to \eqref{eq:zoh_safety_filter} throughout $[t_a,t_b]$. Variation of constants and $\xi_s(t_a)\geq0$ yield
$\xi_s(t_b)\geq k_s\int_{t_b-\kappa/a}^{t_b} e^{-\mu_s(t_b-s)} \bigl(\kappa-a(t_b-s)\bigr)\,ds =\Phi_s(\kappa)\geq\Delta_s$, where the last inequality follows from \eqref{eq:zoh_firing_condition}. Since $\xi_s(t_a)<\Delta_s$ and $\xi_s$ is continuous on $[t_a,t_b]$, there exists $t^*\in(t_a,t_b]$ such that $\xi_s(t^*)=\Delta_s$, contradicting the supposition. Hence, the safety LIF state fires no later than $t_b$.
\end{proof}

\end{document}